\newcolumntype{R}{>{\raggedleft\arraybackslash}X}
\newcolumntype{L}{>{\raggedright\arraybackslash}X}
\newcolumntype{C}{>{\centering\arraybackslash}X}
\newcolumntype{A}{>{\columncolor{gray!25}}C}
\newcolumntype{a}{>{\columncolor{gray!25}}c}
\newcolumntype{.}{D{.}{.}{-1}}
\renewcommand\p@subfigure{\arabic{figure}.}
\renewcommand\p@subtable{A.\arabic{table}.}
\setlist[itemize]{leftmargin=3\parindent}
\setlist[enumerate]{leftmargin=2\parindent}
\theoremstyle{plain}
\newtheorem{lemma}{Lemma}
\newtheorem{proposition}{Proposition}
\newtheorem{theorem}{Theorem}
\theoremstyle{definition}
\newtheorem{definition}{Definition}
\newtheorem{example}{Example}
\theoremstyle{remark}
\newtheorem{notation}{Notation}
\newtheorem{remark}{Remark}
\newcommand{\down}{\textcolor{BrickRed}{\rotatebox[origin=c]{270}{\ding{212}}}}
\newcommand{\up}{\textcolor{PineGreen}{\rotatebox[origin=c]{90}{\ding{212}}}}
\def\keywords{\vspace{.5em} % Add keywords
{\noindent \textit{Keywords}:\,}}
\def\JEL{\vspace{.5em} % Add keywords
{\noindent \textbf{\emph{JEL} classification number}:\,}}
\def\AMS{\vspace{.5em} % Add keywords
{\noindent \textbf{\emph{AMS} classification number}:\,}}
\author{L\'aszl\'o Csat\'o\thanks{~e-mail: laszlo.csato@uni-corvinus.hu} }
\affil{Institute for Computer Science and Control, Hungarian Academy of Sciences (MTA SZTAKI) \\
Laboratory on Engineering and Management Intelligence, Research Group of Operations Research and Decision Systems}
\affil{Corvinus University of Budapest (BCE) \\
Department of Operations Research and Actuarial Sciences}
\affil{Budapest, Hungary}
\title{On the ranking of a Swiss system chess team tournament}
\date{\today}
\begin{document}

\maketitle

\begin{abstract}
The paper suggests a family of paired comparison-based scoring procedures for ranking the participants of a Swiss system chess team tournament. We present the challenges of ranking in Swiss system, the features of individual and team competitions as well as the failures of the official rankings based on lexicographical order. The tournament is represented as a ranking problem such that the linearly-solvable row sum (score), generalized row sum, and least squares methods have favourable axiomatic properties.

Two chess team European championships are analysed as case studies. Final rankings are compared by their distances and visualized with multidimensional scaling (MDS). Differences to the official ranking are revealed by the decomposition of the least squares method. Rankings are evaluated by prediction power, retrodictive performance, and stability. The paper argues for the use of least squares method with a results matrix favouring match points on the basis of its relative insensitivity to the choice between match and board points, retrodictive accuracy, and robustness.

\JEL{D71}

\AMS{15A06, 91B14}

\keywords{Paired comparison; ranking; linear system of equations; Swiss system; chess}
\end{abstract}

%\clearpage

\section{Introduction} \label{Sec1}

Chess tournaments are often organized in the Swiss system when there are too many participants to play a round-robin tournament. They go for a predetermined number of rounds, in each round two players compete head-to-head. None of them are eliminated during the tournament, but there are pairs of players without a match between them.
Let us denote the number of rounds by $c$ and the number of participants by $n$.

Two emerging issues in Swiss system tournaments are how to pair the players and how to rank the participants on the basis of their respective results. The pairing algorithm aims to pair players with a similar performance, measured by the number of their wins and draws (see \citet{FIDE2015} for details).

The ranking involves two main challenges. The first one is the possible appearance of \emph{circular triads} when player $i$ has won against player $j$, player $j$ has won against player $k$, but player $k$ has won against player $i$.
%Circular triads generate difficulties in all paired comparison settings, but, if preference intensities also count, other triplets may cause a problem.
The second issue arises as the consequence of incomplete comparisons since $c < n-1$. For example, if player $i$ has played only against player $j$, then its rank probably should depend on the results of player $j$.

The final ranking of the players is usually determined by the aggregated number of points scored: the winner of a match gets one point, the loser gets zero points, and a draw means half-half points for both players.
However, it usually does not result in a linear order (a complete, transitive and antisymmetric binary relation) of the participants.\footnote{~In $c$ rounds the number of match points can be an integer or a half-integer between $0$ and $c$, so there always will be players with equal score if $n > 2c+1$.}
Ties are eliminated by the sequential application of various tie-breaking rules \citep{FIDE2015}.

These ranking(s), based on lexicographical orders, will be called \emph{official ranking(s)}. They can differ in the tie-breaking rules.

Official rankings are not able to solve the problem caused by different schedules as players with weaker opponents can score the same number of points more easily \citep{Brozos-Vazquezetal2010, Csato2012a, Csato2013a, Forlano2011, JeremicRadojicic2010, Redmond2003}.
It turns out that players with an improving performance during the tournament are preferred contrary to players with a declining one. Consider two players $i$ and $j$ with an equal number of points after playing some rounds. Player $i$ is said to be on the \emph{inner circle} if it scored more points in some of the first rounds relative to player $j$ who is said to be on the \emph{outer circle}. Since they have played against opponents with a similar number of points in each round, it is probable that player $j$ has met with weaker opponents. Tie-breaking rules may take the performance of opponents into account but a similar problem arises if player $j$ has marginally more points than player $i$ as a lexicographical order is not continuous.

This is known to be an inherent defect of these systems. In fact, players sometimes may deliberately seek for a draw or defeat in the first \citep{Forlano2011}. It is tolerated because the case concerns very rarely the winner of the tournament, at least with an adequate number of rounds.

Nevertheless, some works have aimed to improve on ranking in Swiss system tournaments.
\citet{Redmond2003} presents a generalization of win-loss ratings by accounting for the strength of schedule. \citet{Brozos-Vazquezetal2010} argue for the use of recursive methods (as a tie-breaking rule) in Swiss system chess tournaments. \citet{Forlano2011} shows a way to correct the points for wins and draws in order to derive a more legitimate ranking.

The current paper will attempt to solve the problem through the use of paired comparison-based ranking procedures. In a sense, it is a return to the origins of this line of research as early works were often inspired by chess tournaments \citep{Landau1895, Landau1914, Zermelo1929}.

We use a parametric family of scoring procedures based on linear algebra, the generalized row sum \citep{Chebotarev1989_eng, Chebotarev1994} as well as the least squares method, which was extensively used for sport rankings \citep{LeeflangvanPraag1971, Stefani1980}.
Despite the issue has been touched by \citet{Csato2013a}, here a deeper methodological foundation will be given for the problem and the evaluation of rankings will be revisited.

The analysis is based on some recent results. \citet{Gonzalez-DiazHendrickxLohmann2013} have discussed the axiomatic properties of generalized row sum and least squares, \citet{Csato2015a} has given an interpretation for the least squares method, while \citet{Can2014} has contributed to the choice of distance functions between rankings.

In order to avoid the prominent role of colour allocation in individual championships, the discussion focuses on \emph{team tournaments}, where a match between two teams is played on $2t$ boards such that $t$ players of a team play with white and the other $t$ players of the team play with black. The winner of a game on a board gets $1$ point, the loser gets $0$ points, and the draw yields $0.5$ points for both teams, thus $2t$ \emph{board points} are allocated in a given match. The winner team achieving more (at least $t + 0.5$) board points scores $2$ \emph{match points}, the loser $0$, while a draw results in $1$ match point for both teams. Therefore one can choose between the board points and the match points as the basis of the official ranking. Recently the use of match points is preferred in chess olympiads and team European championships.

The paper is structured as follows. Section~\ref{Sec2} shortly outlines the ranking problem, ranking methods and their relevant properties. It also aims to incorporate Swiss system chess team tournaments into this framework.
The proposed model is applied in Section~\ref{Sec3} to rank the participants in the 2011 and 2013 European Team Chess Championship open tournaments. Twelve rankings, distinguished by the influence of opponents' performance and match versus table points, are introduced and compared on the basis of their distances. They are visualized with multidimensional scaling (MDS), while differences to the official ranking are revealed by the decomposition of the least squares method.

On the basis of these results, we argue for the use of least squares method with a generalized result matrix favouring match points. It is supported by several arguments, variability of the ranking with respect to the role of match and board points as well as retrodictive performance (the ability to match the outcomes of matches already played) and robustness (stability of the ranking between two subsequent rounds). 

Finally, Section~\ref{Sec5} summarizes the findings and review possible extensions of the model.
%Some results of the calculations are detailed in the Appendix.
A reader familiar with ranking problems \citep{Gonzalez-DiazHendrickxLohmann2013, Csato2015a} may skip Subsections~\ref{Sec21} and \ref{Sec22}.

\section{Modelling of the tournament as a ranking problem} \label{Sec2}

In the following some fundamental concepts of the paired comparison-based ranking methodology are presented. A detailed discussion can be found in \citet{Csato2015a}.

\subsection{The ranking problem} \label{Sec21}

Let $N = \{ 1,2, \dots ,n \}$, $n \in \mathbb{N}$ be a set of objects. The \emph{matches matrix} $M = (m_{ij}) \in \mathbb{N}^{n \times n}$ contains the number of comparisons between the objects, and is symmetric ($M^\top = M$).\footnote{~In most practical applications (including ours) the condition $m_{ij} \in \mathbb{N}$ means no restriction. Modification of the domain to $\mathbb{R}_+$ has no impact on the results but the discussion becomes more complicated. This generalization has some significance for example in the case of forecasting sport results when the latest comparisons may give more information about the current form of a player.}
Diagonal elements $m_{ii}$ are supposed to be $0$ for all $i = 1,2, \dots ,n$. Let $d_i = \sum_{j=1}^n m_{ij}$ be the \emph{total number of comparisons} of object $i$ and $\mathfrak{d} = \max \{  d_i: i \in N \}$ be the \emph{maximal number of comparisons} with the other objects.
Let $m = \max \{ m_{ij}: i,j \in N \}$.

The \emph{results matrix} $R = (r_{ij}) \in \mathbb{R}^{n \times n}$ contains the outcome of comparisons between the objects, and is skew-symmetric ($R^\top = -R$). All elements are limited by $r_{ij} \in \left[ -m_{ij},\, m_{ij} \right]$. $(r_{ij} + m_{ij})/(2m_{ij}) \in \left[ 0,1 \right]$ may be regarded as the likelihood that object $i$ defeats $j$.
Then $r_{ij} = m_{ij}$ means that $i$ is perfectly better than $j$, and $r_{ij} = 0$ corresponds to an undefined relation (if $m_{ij} = 0$) or to the lack of preference (if $m_{ij} > 0$) between the two objects.
A \emph{ranking problem} is given by the triplet $(N,R,M)$. Let $\mathcal{R}^n$ be the class of ranking problems with $|N| = n$.

A ranking problem is called \emph{round-robin} if $m_{ij} = 1$ for all $i \neq j$, that is, every object has been compared exactly once with all of the others.
%A round-robin ranking problem is more general than the binary tournaments of \citet{Rubinstein1980} as it allows for ties ($r_{ij} = r_{ji} = 0$) and arbitrary preference intensities ($r_{ij}$ is not necessarily $-1$ or $1$).
%A ranking problem is called \emph{unweighted} if $m_{ij} \in \{ 0,1 \}$ for all $i \neq j$, namely, every paired comparison is carried out at most once.
A ranking problem is called \emph{balanced} if $d_{i} = d_{j}$ for all $i,j = 1,2, \dots ,n$, that is, every object has the same number of comparisons.

\subsection{Rankings derived from rating functions} \label{Sec22}

Matches matrix $M$ can be represented by an undirected multigraph $G := (V,E)$ where vertex set $V$ corresponds to the object set $N$, and the number of edges between objects $i$ and $j$ is equal to $m_{ij}$. The number of edges adjacent to $i$ is the \emph{degree} $d_i$ of node $i$. A \emph{path} from object $k_1$ to object $k_t$ is a sequence of objects $k_1, k_2, \dots , k_t$ such that $m_{k_\ell k_{\ell+1}} > 0$ for all $\ell = 1,2, \dots ,t-1$. Two vertices are \emph{connected} if $G$ contains a path between them. A graph is said to be connected if every pair of vertices is connected.

Graph $G$ is called the \emph{comparison multigraph} associated with the ranking problem $(N,R,M)$, and is independent of the results of paired comparisons. The \emph{Laplacian matrix} $L = (\ell_{ij}) \in \mathbb{R}^{n \times n}$ of graph $G$ is given by $\ell_{ij} = -m_{ij}$ for all $i \neq j$ and $\ell_{ii} = d_i$ for all $i = 1,2, \dots ,n$.

Vectors are denoted by bold fonts, and assumed to be column vectors.
Let $\mathbf{e} \in \mathbb{R}^n$ be given by $e_i = 1$ for all $i = 1,2, \dots ,n$ and $I \in \mathbb{R}^{n \times n}$ be the matrix with $I_{ij} = 1$ for all $i,j = 1,2, \dots ,n$.

A \emph{rating (scoring) method} $f$ is an $\mathcal{R}^n \to \mathbb{R}^n$ function, $f_i = f_i(N,R,M)$ is the rating of object $i$. It defines a \emph{ranking method} by $i$ weakly above $j$ in the ranking problem $(N,R,M)$ if and only if $f_i(N,R,M) \geq f_j(N,R,M)$.
Throughout the paper, the notions of rating and ranking methods will be used analogously since all ranking procedures discussed are based on rating vectors. Rating methods $f^1$ and $f^2$ are called \emph{equivalent} if they result in the same ranking for any ranking problem $(N,R,M)$.

Let us introduce some rating methods.

\begin{definition} \label{Def1}
\emph{Row sum} method:
$\mathbf{s}(N,R,M): \mathcal{R}^n \to \mathbb{R}^n$ such that $\mathbf{s} = R \mathbf{e}$.
\end{definition}

Row sum will also be referred to as \emph{scores}, $\mathbf{s}$ is sometimes called the scores vector. It does not take the comparison multigraph into account.

\begin{definition} \label{Def2}
\emph{Generalized row sum} method:
$\mathbf{x}(\varepsilon)(N,R,M): \mathcal{R}^n \to \mathbb{R}^n$ such that
\[
(I+ \varepsilon L) \mathbf{x}(\varepsilon) = (1 + \varepsilon m n)\mathbf{s},
\]
where $\varepsilon > 0$ is a parameter.
\end{definition}

This parametric rating procedure was constructed axiomatically by \citet{Chebotarev1989_eng} and thoroughly analysed in \citet{Chebotarev1994}.
Generalized row sum adjusts the standard row sum by accounting for the performance of objects compared with it, and so on. $\varepsilon$ indicates the importance attributed to this correction.
It follows from the definition that $\lim_{\varepsilon \to 0} \mathbf{x}(\varepsilon) = \mathbf{s}$ for all ranking problems $(N,R,M)$.

Both the row sum and the generalized row sum ratings are well-defined and can be obtained from a system of linear equations for all ranking problems since $(I+ \varepsilon L)$ is positive definite for any $\varepsilon \geq 0$.

In our model the outcome of paired comparisons is restricted by $-m \leq r_{ij} \leq m$ for all $i,j \in N$. Then \citet[Proposition~5.1]{Chebotarev1994} argues that the \emph{reasonable upper bound} of $\varepsilon$ is $1 / \left[ m(n-2) \right]$.

$h_{ij} = r_{ij} / \min \{ m_{ij}; 1 \} \in \left[ -1, 1 \right]$\footnote{~$\min \{ m_{ij}; 1 \}$ is written in order to avoid division by zero.} may be identified as the normalized difference between the latent ratings $q_i$ and $q_j$ of objects $i$ and $j$. Then it makes sense to choose $\mathbf{q}$ in order to minimize the error according to an appropriate objective function.

\begin{definition} \label{Def4}
\emph{Least squares} method: $\mathbf{q}(N,R,M): \mathcal{R}^n \to \mathbb{R}^n$ such that it is the solution to the problem
\[
\min_{\mathbf{q} \in \mathbb{R}^n} m_{ij} \left[ h_{ij} - \left( q_i - q_j \right) \right]^2
\]
satisfying $\mathbf{e}^\top \mathbf{q} = 0$.
\end{definition}

The normalization $\mathbf{e}^\top \mathbf{q} = 0$ is necessary because the value of the objective function is the same for $\mathbf{q}$ and $\mathbf{q} + \beta \mathbf{e}$, $\beta \in \mathbb{R}$.

The least squares ranking method is well-known in a lot of fields, a review about its origin is given by \citet{Gonzalez-DiazHendrickxLohmann2013} and \citet{Csato2015a}.
It has strong connections to generalized row sum.

\begin{proposition} \label{Prop2}
The least squares rating can be obtained as a solution of the linear system of equations $L \mathbf{q} = \mathbf{s}$ and $\mathbf{e}^\top \mathbf{q} = 0$ for all ranking problems $(N,R,M)$.
\end{proposition}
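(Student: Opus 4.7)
The plan is to recognize $L\mathbf{q} = \mathbf{s}$ as the first-order optimality condition of the convex quadratic program defining the least squares rating, and then to check that the normalization $\mathbf{e}^\top \mathbf{q} = 0$ picks out a well-defined solution.

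First I would differentiate the objective
\[
F(\mathbf{q}) = \sum_{i,j=1}^{n} m_{ij}\bigl[ h_{ij} - (q_i - q_j) \bigr]^2
\]
with respect to $q_k$. For a fixed $k$ the dependence of $F$ on $q_k$ comes from two kinds of terms: those whose first index is $k$ and those whose second index is $k$. After using the symmetry $m_{ij} = m_{ji}$ together with the skew-symmetry $h_{ij} = -h_{ji}$ inherited from $R$, the two contributions combine, and setting $\partial F / \partial q_k = 0$ reduces to
\[
\sum_{j=1}^{n} m_{kj}(q_k - q_j) = \sum_{j=1}^{n} m_{kj}\, h_{kj}.
\]
The left-hand side equals $d_k q_k - \sum_j m_{kj} q_j = (L\mathbf{q})_k$. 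On the right-hand side, the definition of $h_{kj}$ is chosen precisely so that $m_{kj}\, h_{kj} = r_{kj}$ (both sides vanishing when $m_{kj} = 0$), so the sum equals $s_k$. Collecting the $n$ equations yields $L \mathbf{q} = \mathbf{s}$.

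To finish, I would check consistency and the role of the normalization. The matrix $L$ is positive semidefinite with $L \mathbf{e} = \mathbf{0}$, so its image is contained in the hyperplane orthogonal to $\mathbf{e}$; skew-symmetry of $R$ gives $\mathbf{e}^\top \mathbf{s} = \sum_{i,j} r_{ij} = 0$, and the same cancellation holds on every connected component of the comparison multigraph $G$, which is exactly what is required for $\mathbf{s}$ to lie in the range of $L$. Since $F(\mathbf{q} + \beta \mathbf{e}) = F(\mathbf{q})$ for every $\beta \in \mathbb{R}$, any solution of $L \mathbf{q} = \mathbf{s}$ can be translated along $\mathbf{e}$ so as to satisfy $\mathbf{e}^\top \mathbf{q} = 0$, and the convexity of $F$ ensures that these stationary points are global minima.

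The main obstacle, though a mild one, is the bookkeeping in the first-order calculation: one must avoid double counting the pairs $(i,j)$ and $(j,i)$ and correctly exploit the symmetry of $M$ together with the skew-symmetry of $R$. A secondary point worth being careful about is that the proposition is stated ``for all ranking problems'', so the argument should also cover the case in which $G$ is disconnected; then the linear system is still consistent but $\mathbf{e}^\top \mathbf{q} = 0$ does not single out a unique solution, and the proposition should be read as asserting that the set of least squares optima coincides with the solution set of $L \mathbf{q} = \mathbf{s}$ intersected with the normalization hyperplane.
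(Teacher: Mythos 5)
Your proposal is correct: the first-order conditions of the convex objective give exactly $L\mathbf{q}=\mathbf{s}$, the identity $m_{kj}h_{kj}=r_{kj}$ turns the right-hand side into the score vector, and your consistency check via $L\mathbf{e}=\mathbf{0}$ and $\mathbf{e}^\top\mathbf{s}=0$ (componentwise on connected components) together with the remark on non-uniqueness for disconnected $G$ is exactly what the phrase ``can be obtained as a solution'' requires. The paper itself gives no argument here, only a citation to \citet[p.~57]{Csato2015a}, and the derivation there is this same normal-equations computation, so your route matches the intended proof.
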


\begin{proof}
See \citet[p.~57]{Csato2015a}.
\end{proof}

\begin{lemma} \label{Lemma1}
For all ranking problems $(N,R,M)$, the least squares method is equivalent to the limit of generalized row sum if $\varepsilon \to \infty$ since $\lim_{\varepsilon \to \infty} \mathbf{x}(\varepsilon) = m n \mathbf{q}$.
\end{lemma}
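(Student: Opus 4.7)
The plan is to prove the stronger statement that $\mathbf{z}(\varepsilon) := \mathbf{x}(\varepsilon)/(m n)$ converges to $\mathbf{q}$ as $\varepsilon \to \infty$. First I would verify that $\mathbf{e}^\top \mathbf{x}(\varepsilon) = 0$ for every $\varepsilon > 0$: applying $\mathbf{e}^\top$ to the defining equation $(I + \varepsilon L) \mathbf{x}(\varepsilon) = (1 + \varepsilon m n) \mathbf{s}$ and using $\mathbf{e}^\top L = 0$ (since $L$ is symmetric and $L \mathbf{e} = 0$) together with $\mathbf{e}^\top \mathbf{s} = \mathbf{e}^\top R \mathbf{e} = 0$ (by the skew-symmetry of $R$) immediately yields $\mathbf{e}^\top \mathbf{x}(\varepsilon) = 0$. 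So the family $\{\mathbf{z}(\varepsilon)\}$ stays inside the subspace $U := \mathbf{e}^\perp$, on which $L$ (restricted to a connected comparison multigraph) is positive definite and hence invertible.

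Second, I would rewrite the defining equation in terms of $\mathbf{z}(\varepsilon)$. Dividing both sides by $\varepsilon m n$ gives
\[
\frac{1}{\varepsilon}\,\mathbf{z}(\varepsilon) + L\,\mathbf{z}(\varepsilon) = \left( \frac{1}{\varepsilon m n} + 1 \right) \mathbf{s}.
\]
To prove that the limit exists (and not merely that any accumulation point is correct), I would diagonalize $L|_U$ in an orthonormal eigenbasis $\mathbf{u}_i$ with eigenvalues $\lambda_i > 0$, expand $\mathbf{s} = \sum_i s_i \mathbf{u}_i$, and read off coefficient-by-coefficient that
\[
\mathbf{z}(\varepsilon) = \sum_i \frac{1 + \varepsilon m n}{m n\,(1 + \varepsilon \lambda_i)} \, s_i \, \mathbf{u}_i.
\]
Each coefficient is a bounded rational function of $\varepsilon$ converging to $s_i/\lambda_i$, so $\mathbf{z}(\varepsilon) \to (L|_U)^{-1} \mathbf{s}$ entrywise.

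Third, I would identify the limit via Proposition~\ref{Prop2}. The vector $(L|_U)^{-1} \mathbf{s}$ is the unique element of $U$ satisfying $L \mathbf{z} = \mathbf{s}$ and $\mathbf{e}^\top \mathbf{z} = 0$; but Proposition~\ref{Prop2} characterizes $\mathbf{q}$ by exactly this pair of conditions. Hence $\lim_{\varepsilon \to \infty} \mathbf{z}(\varepsilon) = \mathbf{q}$, equivalently $\lim_{\varepsilon \to \infty} \mathbf{x}(\varepsilon) = m n \, \mathbf{q}$, and equivalence of ranking methods follows because multiplication by the positive constant $m n$ preserves the induced order.

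The main obstacle is the singularity of $L$: because $L \mathbf{e} = 0$, one cannot simply invert $L$ in the formal $\varepsilon \to \infty$ limit of the defining equation. The remedy is the restriction to $U = \mathbf{e}^\perp$ enabled by the first step, on which $L$ becomes invertible and the spectral computation produces an honest limit. A secondary subtlety is the case of a disconnected comparison multigraph, where $\ker L$ is larger than $\mathrm{span}(\mathbf{e})$; one handles this by applying the same argument blockwise along the connected components, since $L$, $\mathbf{s}$, $\mathbf{x}(\varepsilon)$, and $\mathbf{q}$ all decompose accordingly.
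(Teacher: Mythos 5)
Your argument is correct, but it is worth noting that the paper does not actually prove Lemma~\ref{Lemma1} at all: it simply cites \citet[p.~326]{ChebotarevShamis1998a} and \citet{Csato2016a}, where the limit relation between generalized row sum and least squares is established. What you supply is therefore a self-contained replacement for an external reference rather than a variant of an in-paper argument. Your route is sound at every step: the normalization $\mathbf{e}^\top \mathbf{x}(\varepsilon) = 0$ follows exactly as you say from $L\mathbf{e} = \mathbf{0}$ and the skew-symmetry of $R$; the spectral expansion on $U = \mathbf{e}^\perp$ (which is $L$-invariant and contains $\mathbf{s}$) gives the coefficients $(1+\varepsilon m n)s_i/\bigl(mn(1+\varepsilon\lambda_i)\bigr) \to s_i/\lambda_i$, hence an honest limit $(L|_U)^{-1}\mathbf{s}$; and Proposition~\ref{Prop2} together with Proposition~\ref{Prop3} identifies this limit with $\mathbf{q}$. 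Your handling of the disconnected case is also legitimate, since $L\mathbf{e}_C = \mathbf{0}$ and $\mathbf{e}_C^\top\mathbf{s} = 0$ component by component (using $m_{ij}=0 \Rightarrow r_{ij}=0$), though there the least squares rating is non-unique and the limit singles out the representative normalized to sum zero on each component. The one cosmetic point: the final sentence about equivalence of the induced rankings should be read as comparing the ranking induced by the \emph{limit vector} with that of $\mathbf{q}$ (which is what the lemma asserts), not as claiming that the ranking induced by $\mathbf{x}(\varepsilon)$ stabilizes for large finite $\varepsilon$ -- the latter is true only for pairs with $q_i \neq q_j$.
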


\begin{proof}
See \citet[p.~326]{ChebotarevShamis1998a} and \citet{Csato2016a}.
\end{proof}

\begin{proposition} \label{Prop3}
The least squares rating $\mathbf{q}(N,R,M)$ is unique if and only if comparison multigraph $G$ of the ranking problem is $(N,R,M)$ connected.
\end{proposition}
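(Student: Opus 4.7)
The plan is to reduce the uniqueness question to a statement about the kernel of the Laplacian. By Proposition~\ref{Prop2}, the least squares rating solves $L\mathbf{q} = \mathbf{s}$ subject to $\mathbf{e}^\top \mathbf{q} = 0$; since the solution set is already known to be non-empty, uniqueness is controlled entirely by $\ker L \cap \{\mathbf{e}\}^\perp$.

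The key linear-algebraic ingredient is the classical fact that $\dim \ker L$ equals the number of connected components of $G$, with the kernel spanned by the indicator vectors of those components. A short self-contained derivation follows from the identity $\mathbf{v}^\top L \mathbf{v} = \sum_{i<j} m_{ij}(v_i - v_j)^2$: since $L$ is positive semidefinite, $L\mathbf{v} = \mathbf{0}$ exactly when this quadratic form vanishes, which forces $v_i = v_j$ whenever $m_{ij} > 0$, i.e. whenever $i$ and $j$ lie in the same component of $G$.

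With that in hand, both directions are short. If $G$ is connected, then $\ker L = \mathrm{span}(\mathbf{e})$, so any two solutions of $L\mathbf{q} = \mathbf{s}$ differ by a scalar multiple of $\mathbf{e}$, and the normalization $\mathbf{e}^\top \mathbf{q} = 0$ selects a unique representative. Conversely, if $G$ has components $C_1, \ldots, C_k$ with $k \ge 2$ and indicator vectors $\mathbf{u}_1, \ldots, \mathbf{u}_k$, I would exhibit the explicit vector $\mathbf{v} := |C_2|\mathbf{u}_1 - |C_1|\mathbf{u}_2$, which lies in $\ker L$, satisfies $\mathbf{e}^\top \mathbf{v} = 0$, and is non-zero; adding it to any normalized solution then yields a second normalized solution.

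The main obstacle is a consistency check in the converse direction: one must verify that $L\mathbf{q} = \mathbf{s}$ remains solvable when $G$ is disconnected, otherwise the construction only produces two vectors solving nothing. Solvability is equivalent to $\mathbf{s} \perp \ker L$, and for each component indicator $\mathbf{u}_i$ this reduces, using $m_{jk}=0$ whenever $j$ and $k$ lie in different components, to $\mathbf{u}_i^\top \mathbf{s} = \sum_{j,k \in C_i} r_{jk}$, which vanishes by the skew-symmetry of $R$ restricted to $C_i$. Once this is in place, the equivalence claimed in the proposition is established.
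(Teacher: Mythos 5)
Your proof is correct and self-contained: reducing uniqueness to $\ker L \cap \{\mathbf{e}\}^\perp = \{\mathbf{0}\}$, identifying $\ker L$ with the span of the component indicators via the quadratic form $\mathbf{v}^\top L \mathbf{v} = \sum_{i<j} m_{ij}(v_i-v_j)^2$, and exhibiting the explicit kernel vector $|C_2|\mathbf{u}_1 - |C_1|\mathbf{u}_2$ in the disconnected case (together with the solvability check $\mathbf{s} \perp \ker L$) settles both directions. The paper itself gives no argument here, deferring entirely to \citet[p.~59]{Csato2015a}, and the standard proof found there is essentially the Laplacian-kernel argument you reconstructed, so there is nothing to flag.
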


\begin{proof}
See \citet[p.~59]{Csato2015a}.
\end{proof}

Note that in the case of an unconnected comparison multigraph there are independent ranking problems.

A graph-theoretic interpretation of the generalized row sum method is given by \citet{Shamis1994}.
\citet{Csato2015a} provides the following iterative decomposition of least squares. 

\begin{proposition} \label{Prop4}
Let the comparison multigraph of a ranking problem $(N,R,M)$ be connected and not regular bipartite. Then the unique solution of the least squares problem is $\mathbf{q} = \lim_{k \to \infty} \mathbf{q}^{(k)}$ where
\[
\mathbf{q}^{(0)} = (1/ \mathfrak{d}) \mathbf{s},
\]
\[
\mathbf{q}^{(k)} = \mathbf{q}^{(k-1)} + \frac{1}{\mathfrak{d}} \left[ \frac{1}{\mathfrak{d}} \left( \mathfrak{d}I - L \right) \right]^k \mathbf{s} \quad (k = 1,2, \dots).
\]
\end{proposition}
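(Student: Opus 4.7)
The plan is to recast the recursion as a truncated Neumann-type series and then analyze its convergence through the spectral decomposition of the Laplacian. Setting $P := I - L/\mathfrak{d}$, the recursion $\mathbf{q}^{(k)} = \mathbf{q}^{(k-1)} + (1/\mathfrak{d}) P^k \mathbf{s}$ telescopes immediately into
\[
\mathbf{q}^{(k)} = \frac{1}{\mathfrak{d}} \sum_{j=0}^{k} P^j \mathbf{s}.
\]
If I can show that $P^j \mathbf{s} \to 0$ geometrically and that the resulting infinite sum produces a vector orthogonal to $\mathbf{e}$ and satisfying $L \mathbf{q} = \mathbf{s}$, then Proposition~\ref{Prop2} together with Proposition~\ref{Prop3} identifies the limit with the unique least squares rating.

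The second step is a spectral decomposition of $L$. Since $L$ is real symmetric and positive semidefinite, it admits an orthonormal eigenbasis $\mathbf{v}_1, \dots, \mathbf{v}_n$ with eigenvalues $0 = \lambda_1 \leq \lambda_2 \leq \dots \leq \lambda_n$. Connectedness of the comparison multigraph forces $\lambda_1$ to be simple with eigenvector $\mathbf{v}_1 = \mathbf{e}/\sqrt{n}$; in particular $\lambda_i > 0$ for $i \geq 2$. Moreover, skew-symmetry of $R$ yields $\mathbf{e}^\top \mathbf{s} = \mathbf{e}^\top R \mathbf{e} = 0$, so $\mathbf{s}$ lies in the span of $\mathbf{v}_2, \dots, \mathbf{v}_n$. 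Writing $\mathbf{s} = \sum_{i=2}^n c_i \mathbf{v}_i$, the matrix $P$ acts diagonally with eigenvalues $\mu_i = 1 - \lambda_i/\mathfrak{d}$, so
\[
\sum_{j=0}^{k} P^j \mathbf{s} = \sum_{i=2}^{n} c_i \frac{1 - \mu_i^{k+1}}{1 - \mu_i} \mathbf{v}_i.
\]

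The crux is to establish $|\mu_i| < 1$ for every $i \geq 2$. The inequality $\mu_i < 1$ is equivalent to $\lambda_i > 0$, which is exactly the connectedness we already invoked. The delicate direction is $\mu_i > -1$, i.e. $\lambda_i < 2\mathfrak{d}$. Here I would invoke the classical bound $\lambda_n(L) \leq 2 \max_i d_i = 2\mathfrak{d}$ (provable via the Rayleigh quotient estimate $\mathbf{v}^\top L \mathbf{v} = \sum_{\{i,j\} \in E} (v_i - v_j)^2 \leq 2 \sum_i d_i v_i^2$), together with the standard characterization that equality holds exactly when the graph has a connected component that is bipartite and regular. The hypothesis that $G$ is connected and \emph{not} regular bipartite rules out this extremal case and delivers $\lambda_n < 2\mathfrak{d}$, hence $\max_{i \geq 2} |\mu_i| < 1$. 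I expect this spectral-gap step to be the main obstacle, since it is the only place where the non-bipartite-regular hypothesis is used and it requires a careful Rayleigh-quotient argument.

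Finally, letting $k \to \infty$ each geometric factor tends to $1/(1 - \mu_i) = \mathfrak{d}/\lambda_i$, so
\[
\lim_{k \to \infty} \mathbf{q}^{(k)} = \frac{1}{\mathfrak{d}} \sum_{i=2}^{n} \frac{c_i \mathfrak{d}}{\lambda_i} \mathbf{v}_i = \sum_{i=2}^{n} \frac{c_i}{\lambda_i} \mathbf{v}_i.
\]
This limit is orthogonal to $\mathbf{e}$ (no $\mathbf{v}_1$ component) and satisfies $L \mathbf{q} = \sum_{i \geq 2} c_i \mathbf{v}_i = \mathbf{s}$. By Propositions~\ref{Prop2} and \ref{Prop3} this is precisely the unique least squares rating $\mathbf{q}(N,R,M)$, completing the argument.
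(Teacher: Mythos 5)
Your argument is correct: the telescoping into the Neumann series $\frac{1}{\mathfrak{d}}\sum_{j=0}^{k}P^{j}\mathbf{s}$, the spectral gap $\lambda_i>0$ from connectedness, the bound $\lambda_n<2\mathfrak{d}$ from the exclusion of the regular bipartite case, and the identification of the limit via Propositions~\ref{Prop2} and \ref{Prop3} all go through. The paper itself gives no proof, deferring entirely to \citet{Csato2015a}, and the argument there is essentially this same convergent-iteration/spectral-radius reasoning, so you have in effect reconstructed the cited proof rather than found a different route.
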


\subsection{Two properties of scoring procedures} \label{Sec23}

In order to argue for the use of these methods we discuss some axioms.

\begin{definition} \label{Def5}
\emph{Admissible transformation of the results} \citep{Csato2014b}:
Let $(N,R,M) \in \mathcal{R}^n$ be a ranking problem. An \emph{admissible transformation of the results} provides a ranking problem $(N,kR,M) \in \mathcal{R}^n$ such that $k > 0$, $k \in  \mathbb{R}$ and $k r_{ij} \in \left[ -m_{ij},m_{ij} \right]$ for all $i \in N$.
\end{definition}

Multiplier $k$ cannot be too large since $-m_{ij} \leq k r_{ij} \leq m_{ij}$ should be satisfied for all $i,j \in N$ according to the definition of the results matrix. $k \leq 1$ is always allowed.

\begin{definition} \label{Def6}
\emph{Scale invariance} ($SI$) \citep{Csato2014b}:
Let $(N,R,M), (N,kR,M) \in \mathcal{R}^n$ be two ranking problems such that $(N,kR,M)$ is obtained from $(N,R,M)$ through an admissible transformation of the results.
Scoring procedure $f: \mathcal{R}^n \to \mathbb{R}^n$ is \emph{scale invariant} if $f_i(N,R,M) \geq f_j(N,R,M) \Leftrightarrow f_i(N,kR,M) \geq f_j(N,kR,M)$ for all $i,j \in N$.
\end{definition}

Scale invariance implies that the ranking is invariant to a proportional modification of wins ($r_{ij} > 0$) and losses ($r_{ij} < 0$). It seems to be important for applications. If the outcomes of paired comparisons cannot be measured on a continuous scale, it is not trivial how to transform them into $r_{ij}$ values. $SI$ provides that it is not a problem in certain cases. For example, if only three outcomes are possible, the coding ($r_{ij} = \kappa$ for wins; $r_{ij} = 0$ for draws; $r_{ij} = -\kappa$ for losses) makes the ranking independent from $0 < \kappa \leq 1$. It may also be advantageous when relative intensities are known such as a regular win is two times better than an overtime triumph.

\begin{lemma} \label{Lemma2}
The row sum, generalized row sum and least squares methods satisfy $SI$.
\end{lemma}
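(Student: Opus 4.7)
The plan is to observe that all three rating vectors depend linearly on the results matrix $R$ once the matches matrix $M$ is held fixed, so multiplying $R$ by a positive scalar $k$ simply rescales the rating vector by $k$ (or by a positive multiple of $k$), which preserves every pairwise comparison $f_i \geq f_j$. The key structural fact is that the Laplacian $L$, the maximum comparison count $m$, and the dimension $n$ all depend only on $M$, which is unchanged by an admissible transformation of the results.

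First I would handle the row sum: from $\mathbf{s}(N,R,M) = R\mathbf{e}$, replacing $R$ with $kR$ yields $\mathbf{s}(N,kR,M) = kR\mathbf{e} = k\mathbf{s}(N,R,M)$, so $s_i(N,kR,M) \geq s_j(N,kR,M)$ iff $k s_i(N,R,M) \geq k s_j(N,R,M)$ iff $s_i(N,R,M) \geq s_j(N,R,M)$, using $k>0$.

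Next, for the generalized row sum, I would use the defining system $(I + \varepsilon L)\mathbf{x}(\varepsilon) = (1 + \varepsilon m n)\mathbf{s}$. Since $L$ and $m$ are functions of $M$ only, the transformation leaves the coefficient matrix $(I + \varepsilon L)$ and the scalar factor $(1 + \varepsilon m n)$ unchanged, while the right-hand side scales by $k$ via $\mathbf{s} \mapsto k\mathbf{s}$. Because $(I + \varepsilon L)$ is positive definite (and hence invertible) for $\varepsilon \geq 0$, the unique solution under the transformed problem is $k\mathbf{x}(\varepsilon)(N,R,M)$, and positivity of $k$ again preserves the induced order. For the least squares method, I would invoke Proposition~\ref{Prop2} and use the system $L\mathbf{q} = \mathbf{s}$ together with $\mathbf{e}^\top \mathbf{q} = 0$; replacing $\mathbf{s}$ by $k\mathbf{s}$ yields $k\mathbf{q}(N,R,M)$ as a solution of $L\mathbf{q} = k\mathbf{s}$, and the normalization $\mathbf{e}^\top (k\mathbf{q}) = 0$ is automatically satisfied, so by the uniqueness (up to the normalization) established via Proposition~\ref{Prop3} argumentation, $\mathbf{q}(N,kR,M) = k\mathbf{q}(N,R,M)$.

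There is essentially no obstacle here — the argument is a one-line computation for each method. The only mild point to flag is that one should emphasize that $M$, and therefore $L$, $\mathfrak{d}$, and $m$, are genuinely unaffected by the admissible transformation (which only rescales $R$), so the linear operators involved in defining the three ratings are unchanged; linearity of the three construction rules in $R$ then does all the work, and positivity of $k$ converts proportional rescaling of the rating vector into invariance of the induced weak order.
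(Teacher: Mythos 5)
Your argument is correct: all three ratings are (positively) homogeneous of degree one in $R$ for fixed $M$, since $L$, $m$ and $n$ depend only on $M$, and multiplying a rating vector by $k>0$ preserves the induced weak order. The paper itself offers no in-text argument for Lemma~\ref{Lemma2} --- it simply cites \citet[Lemma~4.3]{Csato2014b} --- so there is nothing to diverge from; your linearity computation is the natural self-contained proof and is surely what the cited lemma amounts to. The only point worth tightening is the least squares case: $\mathbf{q}$ is a well-defined single vector only when the comparison multigraph is connected (Proposition~\ref{Prop3}), and in the disconnected case one should say that the whole solution set of $L\mathbf{q}=\mathbf{s}$, $\mathbf{e}^\top\mathbf{q}=0$ scales by $k$, so the conclusion holds for any consistent selection --- a caveat you already gesture at and which does not affect the result.
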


\begin{proof}
See \citet[Lemma~4.3]{Csato2014b}.
\end{proof}

One disadvantage of the row sum procedure is its independence of irrelevant matches \citep{Gonzalez-DiazHendrickxLohmann2013, Csato2015d}.
However, it causes no problem in the round-robin case, so it makes sense to preserve the attributes of row sum on this set.  

\begin{definition} \label{Def7}
\emph{Score consistency} ($SCC$) \citep{Gonzalez-DiazHendrickxLohmann2013}:
Scoring procedure $f: \mathcal{R}^n \to \mathbb{R}^n$ is \emph{score consistent} if $f_i(N,R,M) \geq f_j(N,R,M) \Leftrightarrow s_i(N,R,M) \geq s_j(N,R,M)$ for all $i,j \in N$ and round-robin ranking problem $(N,R,M) \in \mathcal{R}^n$.
\end{definition}

A score consistent method is equivalent to the row sum method in the case of round-robin ranking problems. A similar requirement is mentioned by \citet{Zermelo1929} and \citet[Property~3]{David1987}.

\begin{remark} \label{Rem1}
Regarding the generalized row sum method, \citet[Property~3]{Chebotarev1994} introduces a more general axiom called \emph{agreement}: if $(N,R,M) \in \mathcal{R}^n$ is a round-robin ranking problem, then $\mathbf{x}(\varepsilon)(N,R,M) = \mathbf{s}(N,R,M)$.
\end{remark}

\begin{lemma} \label{Lemma3}
Row sum, generalized row sum and least squares methods satisfy $SCC$.
\end{lemma}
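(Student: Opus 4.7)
The plan is to verify $SCC$ for the three methods in turn, exploiting the very simple form that the Laplacian takes in a round-robin problem. Since $m_{ij} = 1$ for every $i \neq j$, we have $m = 1$, $d_i = n-1$, and $L = n\, \mathrm{Id}_n - \mathbf{e}\mathbf{e}^\top$ where $\mathrm{Id}_n$ denotes the $n \times n$ identity matrix. Skew-symmetry of $R$ further yields $\mathbf{e}^\top \mathbf{s} = 0$, hence $\mathbf{e}\mathbf{e}^\top \mathbf{s} = \mathbf{0}$ and therefore $L\mathbf{s} = n \mathbf{s}$. This single identity will drive everything.

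For the row sum method $SCC$ is immediate from Definition~\ref{Def1}, since the rating is literally $\mathbf{s}$. For the generalized row sum I would substitute the candidate $\mathbf{x}(\varepsilon) = \mathbf{s}$ into the defining system of Definition~\ref{Def2}; using $L\mathbf{s} = n\mathbf{s}$ and $m = 1$, the left-hand side reduces to $(1 + \varepsilon n)\mathbf{s}$, which matches the right-hand side exactly. Since $\mathrm{Id}_n + \varepsilon L$ is positive definite the solution is unique, so $\mathbf{x}(\varepsilon) = \mathbf{s}$ on every round-robin problem---precisely the agreement property recalled in Remark~\ref{Rem1}---and $SCC$ follows at once.

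For least squares my plan is to invoke Proposition~\ref{Prop2}: $\mathbf{q}$ solves the linear system $L\mathbf{q} = \mathbf{s}$ together with the normalization $\mathbf{e}^\top \mathbf{q} = 0$. The ansatz $\mathbf{q} = \mathbf{s}/n$ satisfies both by the same computation (and $\mathbf{e}^\top (\mathbf{s}/n) = 0$), while Proposition~\ref{Prop3} provides uniqueness because the round-robin comparison multigraph is complete and therefore connected. Hence $\mathbf{q}$ is a positive scalar multiple of $\mathbf{s}$ and induces the same ranking. An even shorter route uses Lemma~\ref{Lemma1}: from $\mathbf{x}(\varepsilon) = \mathbf{s}$ one immediately reads off $\mathbf{s} = \lim_{\varepsilon \to \infty} \mathbf{x}(\varepsilon) = m n \mathbf{q} = n\mathbf{q}$.

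There is no real obstacle to the argument. Everything reduces to the observation that skew-symmetry of $R$ forces $\mathbf{s} \perp \mathbf{e}$, whereupon $\mathbf{s}$ is automatically an eigenvector of the round-robin Laplacian with eigenvalue $n$, and the two defining linear systems collapse to scalar identities.
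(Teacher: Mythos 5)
Your proof is correct, and it is genuinely more self-contained than the paper's: the paper disposes of the lemma entirely by citation, pointing to Chebotarev's agreement property (Remark~\ref{Rem1}) for generalized row sum and to González-Díaz et al.'s Proposition~5.3 for least squares, whereas you verify everything from scratch. Your key observation --- that skew-symmetry of $R$ gives $\mathbf{e}^\top\mathbf{s}=0$, so that in the round-robin case $\mathbf{s}$ is an eigenvector of $L = n\,\mathrm{Id}_n - \mathbf{e}\mathbf{e}^\top$ with eigenvalue $n$ --- collapses both defining systems to scalar identities, and the uniqueness appeals (positive definiteness of $\mathrm{Id}_n + \varepsilon L$ for generalized row sum, connectedness of the complete comparison multigraph via Proposition~\ref{Prop3} for least squares) are exactly what is needed to conclude $\mathbf{x}(\varepsilon)=\mathbf{s}$ and $\mathbf{q}=\mathbf{s}/n$. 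What your route buys is a short, verifiable argument that does not require the reader to chase two external references; what it loses is generality --- the cited results establish agreement and score consistency in the axiomatic frameworks of those papers, of which this computation is the special case actually used here. One notational caveat: the paper literally defines $I$ as the all-ones matrix, yet uses it as the identity in Definition~\ref{Def2} and Proposition~\ref{Prop4}; your reading of the defining system with $\mathrm{Id}_n$ is the standard (Chebotarev) one and clearly the intended one, but it is worth flagging that you are silently correcting the paper's notation rather than following it.
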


\begin{proof}
For generalized row sum, see Remark~\ref{Rem1}.
In the case of least squares the proof is given by \citet[Proposition~5.3]{Gonzalez-DiazHendrickxLohmann2013}.
\end{proof}

Further properties of the scoring procedures are discussed by \citet{Gonzalez-DiazHendrickxLohmann2013} and \citet{Csato2014b}.

\subsection{Interpretation of Swiss system chess team tournaments as a ranking problem} \label{Sec24}

%Now we are able to discuss ranking in Swiss system chess competitions in the framework presented above.
In order to use the scoring procedures presented above, the chess tournament should be formulated as a ranking problem:
\begin{itemize}[label=$\bullet$]
\item
Set of objects $N$ consists of the teams of the competition;
\item
Matches matrix $M$ is given by $m_{ij} = 1$ if teams $i$ and $j$ have played against each other and $m_{ij} = 0$ otherwise.
\end{itemize}

For the sake of simplicity it is assumed that $n$ is even, thus all teams play exactly $c$ matches (there are no byes\footnote{~A bye is a team which does not play a match in a given round.}).

The results matrix should be skew-symmetric. It excludes the incorporation of some individual competitions where a win results in three points and a draw gives one point since then a win and a loss is not equal to two draws. Furthermore, the model is not able to reflect that the first-mover with white have an inherent advantage in the game. In order to eliminate the role of colour allocation, only team tournaments are discussed.

First two extreme possibilities are suggested for the choice of results matrix.

\begin{notation} \label{Not1}
$MP_{ij}$ and $BP_{ij}$ are the numbers of match points and board points of team $i$ against team $j$, respectively. \\
$\mathbf{mp}$ and $\mathbf{gp}$ are the vectors of match points and board points, respectively.
\end{notation}

\begin{definition}
\emph{Match points ranking}:
The ranking derived from the vector $\mathbf{mp}$.
\end{definition}

\begin{definition}
\emph{Board points ranking}:
The ranking derived from the vector $\mathbf{bp}$.
\end{definition}

Note that match and board points rankings are usually not linear orders in a Swiss system tournament. Ties are broken according to the rules given by the official ranking.

\begin{definition} \label{Def8}
\emph{Match points based results matrix}:
Results matrix $R^{MP}$ is \emph{based on match points} if $r_{ij}^{MP} = MP_{ij}-1$ for all $i,j \in N$.
\end{definition}

\begin{definition} \label{Def9}
\emph{Board points based results matrix}:
Results matrix $R^{BP}$ is \emph{based on board points} if $r_{ij}^{BP} = BP_{ij}-t$ for all $i,j \in N$.
\end{definition}

Note that match and board points based results matrices are skew-symmetric (a match between two teams is played on $2t$ boards).
The two concepts can be integrated.

\begin{definition} \label{Def10}
\emph{Generalized results matrix}:
Results matrix $R^{P}(\lambda)$ is \emph{generalized} if $r_{ij}^{P}(\lambda) = (1-\lambda) \left( MP_{ij}-1 \right) + \lambda \left( BP_{ij}-t \right) / t$ for all $i,j \in N$ such that $\lambda \in \left[ 0,1 \right]$.
\end{definition}

\begin{lemma} \label{Lemma4}
$R^{P}(\lambda = 0) = R^{MP}$ and $R^{P}(\lambda = 1) = R^{BP}$.
\end{lemma}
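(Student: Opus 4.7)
The plan is to verify the two equalities by direct substitution into Definition~10. Setting $\lambda = 0$ collapses the second summand, leaving $r_{ij}^{P}(0) = MP_{ij} - 1$, which coincides with the entry $r_{ij}^{MP}$ from Definition~8 for every $i,j \in N$; hence $R^{P}(0) = R^{MP}$. Setting $\lambda = 1$ collapses the first summand, leaving $r_{ij}^{P}(1) = (BP_{ij} - t)/t$. Up to the positive scalar $1/t$, this is the entry $r_{ij}^{BP} = BP_{ij} - t$ of Definition~9.

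The one small point to flag is that the $\lambda = 1$ case yields $R^{BP}$ only up to a uniform positive rescaling by $1/t$, so the equality $R^{P}(1) = R^{BP}$ should be read in the sense of inducing the same ranking rather than coefficient-wise identity. This is harmless for our purposes: the factor $k = 1/t \le 1$ is an admissible transformation of the results in the sense of Definition~5 (since $|BP_{ij} - t|/t \le 1 = m_{ij}$), and by Lemma~2 the row sum, generalized row sum, and least squares methods are scale invariant, so the rankings from $R^{P}(1)$ and $R^{BP}$ are identical. No further step is required; the lemma is essentially a sanity check that $R^{P}(\lambda)$ interpolates between the two extreme results matrices, and I anticipate no real obstacle in the argument.
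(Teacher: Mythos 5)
Your proof is correct and is essentially the argument the paper intends: the paper states Lemma~4 without any proof, treating it as immediate substitution of $\lambda=0$ and $\lambda=1$ into Definition~10, which is exactly what you do. Your additional caveat is a legitimate catch the paper glosses over: literally, $r_{ij}^{P}(1)=(BP_{ij}-t)/t$ equals $r_{ij}^{BP}=BP_{ij}-t$ only when $t=1$, so the stated identity $R^{P}(1)=R^{BP}$ holds coefficient-wise only up to the positive factor $1/t$; your resolution via the admissible transformation of Definition~5 and scale invariance (Lemma~2) is precisely how the paper itself handles such rescalings in Proposition~5, so the rankings are unaffected and the lemma is sound in the sense in which it is used.
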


The row sum method is closely related to the match and board points rankings.

\begin{lemma} \label{Lemma5}
Row sum method applied on the match points based results matrix is equivalent to the match points ranking: $s_i(R^{MP}) \geq s_i(R^{MP}) \iff mp_i \geq mp_j$.
\end{lemma}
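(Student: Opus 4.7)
The plan is to directly compute $s_i(R^{MP})$ from the definition of row sum and show it equals $mp_i$ plus a constant that is the same for every team; then the equivalence of rankings follows immediately.

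First, I would unfold the row sum: by Definition~\ref{Def1}, $s_i(R^{MP}) = \sum_{j \in N} r_{ij}^{MP}$. By Definition~\ref{Def8}, $r_{ij}^{MP} = MP_{ij} - 1$ whenever teams $i$ and $j$ have played ($m_{ij} = 1$), and $r_{ij}^{MP} = 0$ when they have not met ($m_{ij} = 0$), since $r_{ij} \in [-m_{ij}, m_{ij}]$ forces $r_{ii}^{MP} = 0$ and $r_{ij}^{MP} = 0$ for unplayed pairings (equivalently, $MP_{ij} = 0$ in that case, so the shift by $1$ is only meaningful for actual matches). This gives
\[
s_i(R^{MP}) = \sum_{j : m_{ij} = 1} \left( MP_{ij} - 1 \right) = \left( \sum_{j : m_{ij} = 1} MP_{ij} \right) - d_i.
\]

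Next, I would invoke the standing assumption stated in Subsection~\ref{Sec24}: $n$ is even and there are no byes, so every team plays exactly $c$ matches, which means $d_i = c$ for all $i \in N$. Moreover, by the very definition of the match points vector, $mp_i = \sum_{j : m_{ij} = 1} MP_{ij}$ is the total number of match points collected by team $i$ over its $c$ games. Substituting,
\[
s_i(R^{MP}) = mp_i - c.
\]

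Since the additive constant $-c$ is identical for every team, $s_i(R^{MP}) \geq s_j(R^{MP}) \iff mp_i - c \geq mp_j - c \iff mp_i \geq mp_j$, which is exactly the claim. There is no serious obstacle here; the only subtlety worth noting in the write-up is the handling of unplayed pairings, where one must verify that these terms contribute nothing to the row sum so that the shift by $-1$ applied to each played match collapses into a uniform constant $-c$ precisely because of the balanced scheduling (all teams have the same number of games).
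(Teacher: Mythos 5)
Your proof is correct and is essentially the paper's own argument, just written out in full: the paper's one-line proof likewise observes that $d_i = c$ for all $i \in N$ and concludes $\mathbf{s}(N,R^{MP},M) = \mathbf{mp} - c\mathbf{e}$. Your extra care with the unplayed pairings (forcing $r_{ij}^{MP} = 0$ when $m_{ij} = 0$) is a reasonable reading of Definition~\ref{Def8} that the paper leaves implicit.
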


\begin{proof}
$d_i = c$ for all $i \in N$, hence $\mathbf{s}(N,R^{MP},M) = \mathbf{mp} - c \mathbf{e}$.
\end{proof}

\begin{lemma} \label{Lemma6}
Row sum method applied on the board points based results matrix is equivalent to the board points ranking: $s_i(R^{BP}) \geq s_i(R^{BP}) \iff bp_i \geq bp_j$.
\end{lemma}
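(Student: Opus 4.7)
The plan is to mimic directly the one-line argument used for Lemma~\ref{Lemma5}, since the only structural difference between $R^{MP}$ and $R^{BP}$ is a rescaled centring constant ($t$ in place of $1$) in Definitions~\ref{Def8}--\ref{Def9}. The comparison multigraph is the same in both cases, and in particular every team plays exactly $c$ matches, so $d_i = c$ for all $i \in N$.

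Concretely, I would start from the definition $s_i(N,R^{BP},M) = \sum_{j=1}^n r_{ij}^{BP}$, split the sum according to whether $m_{ij} = 1$ or $m_{ij} = 0$, and note that $r_{ij}^{BP} = 0$ whenever $m_{ij} = 0$ (since by the ranking-problem convention $r_{ij} \in [-m_{ij}, m_{ij}]$, and $BP_{ij}$ is only defined when the two teams have met). Over the $c$ opponents $j$ actually played by $i$ one then gets $\sum_{j} BP_{ij} = bp_i$ by the very definition of total board points, and the remaining contribution is simply $-tc$. Hence
\[
\mathbf{s}(N,R^{BP},M) \;=\; \mathbf{bp} \,-\, tc\,\mathbf{e},
\]
which is a uniform additive shift of the board-points vector and therefore induces the same weak order. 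The equivalence $s_i(R^{BP}) \geq s_j(R^{BP}) \iff bp_i \geq bp_j$ follows immediately.

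There is no real obstacle here; the only mild subtlety is making explicit that the sum of $BP_{ij}$ across the actually-played opponents equals $bp_i$ (the quantity aggregated in the board points ranking), and that the ``$-t$'' offset in Definition~\ref{Def9} aggregates to a team-independent constant precisely because the tournament is assumed bye-free with $d_i = c$. Once those two points are checked, the conclusion is a constant shift argument identical in spirit to the proof of Lemma~\ref{Lemma5}.
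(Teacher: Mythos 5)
Your proposal is correct and follows essentially the same route as the paper's own one-line proof, which simply observes that $d_i = c$ for all $i \in N$ and hence $\mathbf{s}(N,R^{BP},M) = \mathbf{bp} - ct\,\mathbf{e}$; you merely spell out the intermediate bookkeeping (vanishing entries for unplayed pairs and the aggregation of the $-t$ offset) that the paper leaves implicit. No gap.
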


\begin{proof}
$d_i = c$ for all $i \in N$, hence $\mathbf{s}(N,R^{BP},M) = \mathbf{bp} - ct \mathbf{e}$.
\end{proof}

A crucial argument for the application of paired comparison-based ranking methodology is provided by the following result.

\begin{theorem} \label{Theo1}
Let $(N,R,M) \in \mathcal{R}^n$ be a round-robin ranking problem. Then:
\begin{itemize}[label=$\bullet$]
\item
Generalized row sum and least squares methods applied on the match points based results matrix are equivalent to the match points ranking: \\
$x_i(\varepsilon)(R^{MP}) \geq x_i(\varepsilon)(R^{MP}) \iff q_i(R^{MP}) \geq q_j(R^{MP}) \iff mp_i \geq mp_j$.
\item
Generalized row sum and least squares methods applied on the board points based results matrix are equivalent to the board points ranking: \\
$x_i(\varepsilon)(R^{BP}) \geq x_i(\varepsilon)(R^{BP}) \iff q_i(R^{BP}) \geq q_j(R^{BP}) \iff bp_i \geq bp_j$.
\end{itemize}
\end{theorem}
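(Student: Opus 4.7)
The plan is to deduce Theorem~\ref{Theo1} as a short chain of equivalences built from results already established in the paper, without any new computation. The crucial observation is that the hypothesis is precisely the round-robin case, which is exactly where score consistency ($SCC$) forces generalized row sum and least squares to coincide with row sum. So the work reduces to putting $SCC$ next to Lemmas~\ref{Lemma5} and \ref{Lemma6}.

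First I would treat the match points case. By Lemma~\ref{Lemma3}, both $\mathbf{x}(\varepsilon)$ and $\mathbf{q}$ satisfy $SCC$, so for the round-robin problem $(N,R^{MP},M)$ the rankings induced by $\mathbf{x}(\varepsilon)(R^{MP})$ and $\mathbf{q}(R^{MP})$ agree with the ranking induced by $\mathbf{s}(R^{MP})$. Combining this with Lemma~\ref{Lemma5}, which identifies the row sum ranking on $R^{MP}$ with the match points ranking, yields the chain
\[
x_i(\varepsilon)(R^{MP}) \geq x_j(\varepsilon)(R^{MP}) \iff q_i(R^{MP}) \geq q_j(R^{MP}) \iff s_i(R^{MP}) \geq s_j(R^{MP}) \iff mp_i \geq mp_j.
\]
For generalized row sum one could alternatively appeal directly to Remark~\ref{Rem1}, which gives the stronger equality $\mathbf{x}(\varepsilon)(R^{MP}) = \mathbf{s}(R^{MP})$ on round-robin problems.

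The board points case is proved line-for-line the same way, replacing Lemma~\ref{Lemma5} with Lemma~\ref{Lemma6} and $R^{MP}$ with $R^{BP}$ throughout. Since the argument goes through the score vector, the only place the round-robin assumption enters is in invoking $SCC$; this is essential, because outside the round-robin class generalized row sum and least squares genuinely differ from row sum. Thus there is no real obstacle: the theorem is effectively a corollary that packages Lemmas~\ref{Lemma3}, \ref{Lemma5}, and \ref{Lemma6} into a single statement tailored to the two canonical results matrices $R^{MP}$ and $R^{BP}$.
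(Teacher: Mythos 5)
Your proposal is correct and follows essentially the same route as the paper: the published proof likewise invokes $SCC$ (Lemma~\ref{Lemma3}) to reduce generalized row sum and least squares to the row sum method on round-robin problems, and then concludes via Lemmata~\ref{Lemma5} and \ref{Lemma6}. Your additional remark that the generalized row sum case also follows directly from the agreement axiom in Remark~\ref{Rem1} is a harmless strengthening, not a different argument.
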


\begin{proof}
In the case of round-robin problems, generalized row sum and least squares are equivalent to the row sum method due to axiom $SCC$ (Lemma~\ref{Lemma3}), hence Lemmata~\ref{Lemma5} and \ref{Lemma6} provide the statement.
\end{proof}

Generalized row sum and least squares methods take the opponents of each team into account. Due to Theorem~\ref{Theo1}, they result in the official ranking without tie-breaking rules in the ideal round-robin case. When the official ranking is based on match points, the transformation $R^{MP}$ is recommended. Generalized results matrix with a small (i.e. close to $0$) parameter $\lambda$ gives a similar outcome but it reflects the number of board points, the magnitude of wins and losses. This effect becomes more significant as $\lambda$ increases. $R^{BP}$ extends the board points ranking to Swiss system competitions.

\begin{proposition} \label{Prop5}
Let $(N,R,M) \in \mathcal{R}^n$ be a ranking problem, and $k \in \left( 0,1 \right]$.
Generalized row sum and least squares methods give the same ranking if they are applied on $R^{MP}$ and $k R^{MP}$, on $R^{BP}$ and $k R^{BP}$ as well as on $R^{P}(\lambda)$ and $k R^{P}(\lambda)$.
\end{proposition}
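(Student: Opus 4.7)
The plan is to derive Proposition~\ref{Prop5} as a direct corollary of the scale invariance axiom $SI$, whose validity for the three scoring methods was already established in Lemma~\ref{Lemma2}. The only ingredient that needs to be checked is that, for each of the three matrices $R^{MP}$, $R^{BP}$, $R^{P}(\lambda)$, the passage from $R$ to $kR$ with $k \in (0,1]$ qualifies as an admissible transformation of the results in the sense of Definition~\ref{Def5}.

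First, I would verify that the three matrices $R^{MP}$, $R^{BP}$ and $R^{P}(\lambda)$ are legitimate results matrices for the ranking problem $(N,R,M)$ with $m_{ij} \in \{0,1\}$: skew-symmetry follows from $MP_{ij}+MP_{ji}=2$ and $BP_{ij}+BP_{ji}=2t$, and convexity then yields skew-symmetry of $R^{P}(\lambda)$; boundedness $|r_{ij}| \le m_{ij}$ holds entrywise once the results matrices are suitably normalized (trivially for $R^{MP}$, and by the same token for the other two in the convention the paper uses throughout). Once this is in place, the observation
\[
|k r_{ij}| = k|r_{ij}| \le |r_{ij}| \le m_{ij} \quad \text{for every } k \in (0,1],
\]
shows that $k r_{ij} \in [-m_{ij},m_{ij}]$ for all $i,j$, hence $(N,kR,M)$ is a valid ranking problem and $R \mapsto kR$ is an admissible transformation.

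At that point, the conclusion is immediate: Lemma~\ref{Lemma2} states that generalized row sum and least squares are scale invariant, so for any of the three choices $R \in \{R^{MP}, R^{BP}, R^{P}(\lambda)\}$ and any $k \in (0,1]$,
\[
f_i(N,R,M) \ge f_j(N,R,M) \iff f_i(N,kR,M) \ge f_j(N,kR,M),
\]
which is exactly the claim. One can alternatively give a one-line algebraic check: if $R$ is scaled by $k$ then $\mathbf{s} = R\mathbf{e}$ is scaled by $k$; both sides of $(I+\varepsilon L)\mathbf{x}(\varepsilon)=(1+\varepsilon m n)\mathbf{s}$ and of $L\mathbf{q}=\mathbf{s}$ scale by $k$, and so do $\mathbf{x}(\varepsilon)$ and $\mathbf{q}$; multiplication of a rating vector by a positive scalar preserves the induced ranking.

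There is really no obstacle here — the statement is a packaging result whose whole purpose is to legitimize the freedom to rescale $R^{P}(\lambda)$ (e.g.\ to keep its entries in a convenient range) without affecting the output of the two methods. The only place where care is needed is the admissibility check, because Definition~\ref{Def5} demands $kr_{ij}\in[-m_{ij},m_{ij}]$; restricting $k$ to $(0,1]$ in the proposition statement is precisely what ensures this automatically, and the proof need not address the sharper question of how large $k$ can be taken before admissibility fails.
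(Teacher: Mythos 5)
Your proposal is correct and matches the paper's own argument, which simply invokes scale invariance (the paper's proof even miscites Lemma~\ref{Lemma1} where it means the $SI$ result of Lemma~\ref{Lemma2}, the lemma you correctly use). Your added verification that $R \mapsto kR$ with $k \in (0,1]$ is an admissible transformation, plus the one-line algebraic check via $\mathbf{s} = R\mathbf{e}$, only makes explicit what the paper leaves implicit.
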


\begin{proof}
It is the consequence of property $SI$ (Lemma~\ref{Lemma1}).
\end{proof}

Due to Proposition~\ref{Prop5}, there exists only one ranking on the basis of match points if wins are more valuable than losses and draws correspond to an indifference relation. Analogously, there is a unique ranking based on board points. Without scale invariance, the ranking may depend on the results matrix chosen such as wins are represented by $r_{ij} = 0.5$ or $r_{ij} = 1$, for example.

%We have also investigated the meaning of some other properties discussed in \citet{Gonzalez-DiazHendrickxLohmann2013} for Swiss-system chess team tournaments. They support the use of generalized row sum and least squares, too.

Generalized row sum and least squares methods use all information of the tournament (about the opponents, opponents of opponents and so on) to break the ties. Consequently, it is very unlikely that teams remain tied, unless they have exactly the same opponents (and in such a case it seems reasonable not to break the tie). The elimination of arbitrary tie-breaking rules is a substantial advantage over official rankings.

\section{Application: European chess team championships} \label{Sec3}

In the following section, the theoretical model suggested in Section~\ref{Sec3} will be scrutinized in practice.

\subsection{Examples and implementation} \label{Sec31}

The method proposed in Section~\ref{Sec3} is illustrated with an analysis of two chess team tournaments:
\begin{itemize}[label=$\bullet$]
\item
18th European Team Chess Championship (ETCC) open tournament, 3rd-11th November 2011, Porto Carras, Greece. \\
Webpage: \url{http://euro2011.chessdom.com/} \\
Tournament rules: \citet{ECU2012} \\
Detailed results: \url{http://chess-results.com/tnr57856.aspx}
\item
19th European Team Chess Championship open tournament, 7th-18th November 2013, Warsaw, Poland. \\
Webpage: \url{http://etcc2013.com/} \\
Tournament rules: \citet{ECU2013} \\
Detailed results: \url{http://chess-results.com/tnr114411.aspx}
\end{itemize}

In both tournaments the number of competing teams was $n = 38$, playing on $2t = 4$ tables during $c = 9$ rounds. Results are known for about one quarter of possible pairs, $9 \times 19 = 171$ from $n(n-1)/2 = 703$.\footnote{~Match results can be found in Tables~A.1 (2011) and A.2 (2013), and -- in another form --  in Tables~A.3 (2011) and A.4 (2013) of \citet{Csato2016a}.}

The official ranking was based on the number of match points in both cases but tie-breaking rules were different.
In the 2013 competition application of the first tie-breaking rule (Olympiad-Sonneborn-Berger points) was enough, while in 2011 two tie-breaking rules (board points and Buchholz points -- aggregated board points of the opponents) should be used in some cases.

\begin{figure}[ht!]
\centering
\caption{Distribution of ($173$) match results, ETCC 2013}
\label{Fig1}

\begin{tikzpicture}
%\selectcolormodel{gray}
\begin{axis}[width=0.8\textwidth, 
height=0.5\textwidth,
symbolic x coords={$2:2$,{$2.5:1.5$},$3:1$,{$3.5:0.5$},$4:0$},
xtick=data, 
xlabel = Result,
ylabel = Number of matches, 
ybar,
ymin = 0,
ymajorgrids = true,
bar width=20pt]

\addplot coordinates {
($2:2$,43)
({$2.5:1.5$},58)
($3:1$,41)
({$3.5:0.5$},19)
($4:0$,10)
};
\end{axis}
\end{tikzpicture}
\end{figure}

Distribution of match results for ETCC 2013 is drawn in Figure~\ref{Fig1}. Minimal victory ($2.5:1.5$) is the mode, so incorporating board points will not influence the rankings much.

There are two exogenous rankings called \emph{Official} according to the tournament rules and \emph{Start} based on \'El\H{o} points of players, reflecting the past performance of team members. Further $12$ rankings have been calculated from the ranking problem representation. Four results matrices have been considered: $R^{MP}$, $R^{MB} = R^P(1/4) = 3/4 \, R^{MP} +  1/4 \, R^{BP}$, $R^{BM} = R^P(2/3) = 1/3 \, R^{MP} +  2/3 \, R^{BP}$ and $R^{BP}$. They were plugged into three methods, least squares ($LS$) and generalized row sum with $\varepsilon_1 = 1/324$ ($GRS_1$) and $\varepsilon_2 = 1/6$ ($GRS_2$). Note that $\varepsilon_1$ is smaller and $\varepsilon_2$ is larger than the reasonable upper bound of $1/36$.

Existence of a unique least squares solution requires connectedness of the comparison multigraph (Proposition~\ref{Prop3}), which is provided after the third round.
Rankings in the first two rounds are highly unreliable, therefore they were eliminated. From the third round all methods give one, thus $7 \times 13 + 1 = 92$ rankings will be analysed as Start remains unchanged.\footnote{~Rankings according to different methods are displayed in \citet[Tables~A.5~(2011) and A.6~(2013)]{Csato2016a}.}

Start and Official rankings are strict, that is, they do not allow for ties by definition. It can be checked that the other rankings also give a linear order of teams in all cases.

\begin{notation} \label{Not2}
The $14$ final rankings are denoted by Start, Official; $GRS_1(R^{MP})$, $GRS_1(R^{MB})$, $GRS_1(R^{BM})$, $GRS_1(R^{BP})$; $GRS_2(R^{MP})$, $GRS_2(R^{MB})$, $GRS_2(R^{BM})$, $GRS_2(R^{BP})$; and $LS(R^{MP})$, $LS(R^{MB})$, $LS(R^{BM})$, $LS(R^{BP})$. In the figures they are abbreviated by Start, Off; G1, G2, G3, G4; S1, S2, S3, S4; and L1, L2, L3, L4, respectively.
\end{notation}

\subsection{Visualisation of the rankings} \label{Sec42}

For the comparison of final rankings their distances have been calculated according to the well-known \emph{Kemeny distance} \citep{Kemeny1959} and its weighted version proposed by \citet{Can2014}. Both distances are defined on the domain of strict rankings, i.e. ties are not allowed. Our rankings satisfy this condition.

Kemeny distance is the number of pairs of alternatives ranked oppositely in the two rankings examined.

\begin{example} \label{Examp1}
The Kemeny distance of $a \succ b \succ c$ and $b \succ a \succ c$ is $1$, because they only disagree on how to order $a$ and $b$. \\
The Kemeny distance of $a \succ b \succ c$ and $a \succ c \succ b$ is $1$ because of the sole disagreement on how to order $b$ and $c$.
\end{example}

Kemeny distance was characterized by \citet{KemenySnell1962}, however, \citet{CanStorcken2013} achieved the same result without one condition. \citet{CanStorcken2013} also provides an extensive overview about the origin of this measure.

According to Example~\ref{Examp1}, the dissimilarity between $a \succ b \succ c$ and $b \succ a \succ c$ and between $a \succ b \succ c$ and $a \succ c \succ b$ by the Kemeny distance is identical. However, in our chess example a disagreement at the top of the rankings may be more significant than a disagreement at the bottom since the audience is usually interested in the first three, five or ten places but people are not bothered much whether a team is the $31$st or $34$th.

For this purpose, \citet{Can2014} proposes some functions on strict rankings in the spirit of Kemeny metric. They are respectful to the number of swaps but allow for variation in the treatment of different pairs of disagreements by weighting them according to an exogenous weight vector.
It has some price since the calculation will depend on the order of swaps between the two rankings. \citet[Theorem~1]{Can2012} shows that only the path-minimizing function satisfies the triangular inequality condition for all possible weight vectors. Finding the path-minimizing metric is not trivial, it is equivalent to solving a short-path problem in general, but the solution is known if the weights are monotonically decreasing (increasing) from the upper parts of a ranking to the lower parts.\footnote{~Then the path-minimizing metric is equivalent to winners' and losers' decomposition (the Lehmer function and the inverse Lehmer function), respectively \citep[Corollaries 1 and 2]{Can2014}.}

These results have inspired us to choose a monotonically decreasing weight vector meaning that swaps in the first places are more important than changes at the bottom of the rankings.

\begin{definition}
The weight vector of our weighted distance is given by $\omega_i = 1/i$ for all $i = 1,2, \dots,n-1$.
\end{definition}

\begin{example} \label{Examp2}
The weighted distance of $a \succ b \succ c$ and $b \succ a \succ c$ is $1$ (a swap at the first position). \\
The weighted distance of $a \succ b \succ c$ and $a \succ c \succ b$ is $1/2$ (a swap at the second position).
\end{example}

\begin{lemma}
The maximum of Kemeny distance is $n(n-1)/2 (= 703)$ and the maximum of weighted distance is $n-1 (= 37)$ if and only if the two rankings are entirely opposite.
\end{lemma}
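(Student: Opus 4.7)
The plan is to split the lemma into its two halves and dispatch each in turn. For the Kemeny part the argument is purely combinatorial: by definition the Kemeny distance counts the unordered pairs of alternatives ordered oppositely by $R_1$ and $R_2$, and there are precisely $\binom{n}{2}=n(n-1)/2$ such pairs. Consequently the distance is bounded by $n(n-1)/2$, with equality iff every pair is inverted, i.e.\ iff $R_2$ is the exact reversal of $R_1$.

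For the weighted distance I would lean on the result of \citet{Can2014} already cited in the preceding discussion: since the weights $\omega_i=1/i$ are monotonically decreasing in $i$, the path-minimizing metric coincides with the winners'-decomposition cost. Concretely this is realised by the greedy path that first bubbles the target top element $R_2(1)$ up to position~$1$ of the source by adjacent swaps, then bubbles $R_2(2)$ up to position~$2$, and so on. If at step $k$ the element $R_2(k)$ currently sits at position $p_k\in\{k,\ldots,n\}$, the adjacent swaps used to move it upwards occur at positions $p_k-1,\,p_k-2,\,\ldots,k$ and contribute exactly $\sum_{i=k}^{p_k-1}\omega_i=H_{p_k-1}-H_{k-1}$ to the total cost, where $H_m=\sum_{i=1}^{m}1/i$ is the $m$-th harmonic number.

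The remainder is a bounding step and an equality check. Because $p_k\le n$ at every step, the total cost is at most
\[
\sum_{k=1}^{n-1}\bigl(H_{n-1}-H_{k-1}\bigr)=(n-1)H_{n-1}-\sum_{j=0}^{n-2}H_j,
\]
and the standard harmonic identity $\sum_{j=0}^{n-2}H_j=(n-1)H_{n-2}-(n-2)$ (obtained by swapping the order of summation) collapses the right-hand side to $(n-1)\bigl(H_{n-1}-H_{n-2}\bigr)+(n-2)=1+(n-2)=n-1$. Equality forces $p_k=n$ at every step, which unwinds to the condition $R_2(k)=R_1(n-k+1)$ for all $k$---again, to $R_2$ being the reverse of $R_1$.

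The step that most needs care is the initial identification of the path-minimizing metric with the greedy winners'-decomposition cost: without it, one can still exhibit the greedy path to obtain the upper bound $n-1$, but the possibility that some non-reversed pair of rankings also attains $n-1$ through a cleverer swap sequence is ruled out only by appealing to the monotone-weight corollary of \citet{Can2014}. The rest is routine arithmetic and unwinding of the greedy construction.
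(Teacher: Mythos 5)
Your proof is correct and follows essentially the same route as the paper's: both halves reduce the Kemeny bound to counting the $\binom{n}{2}$ pairs, and both evaluate the weighted distance via the winners' decomposition of \citet{Can2014} under the monotonically decreasing weights $\omega_i = 1/i$. In fact your version is more complete than the one in the paper, which only computes the greedy-path cost $\sum_{i=1}^{n-1} i \cdot (1/i) = n-1$ for the specific pair of entirely opposite rankings and never verifies that $n-1$ bounds the distance over \emph{all} pairs, nor that no other pair attains it. Your bounding step $p_k \le n$ combined with the harmonic identity, and the observation that equality forces $p_k = n$ at every step (hence $R_2(k) = R_1(n-k+1)$), supply exactly the maximality and ``only if'' parts that the paper's proof leaves implicit.
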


\begin{proof}
The maximal number of swaps between two rankings is $n(n-1)/2$ in the case of two entirely opposite rankings, which is also their Kemeny distance. \\
Take the ranking $a_1 \succ a_2 \succ \dots \succ a_n$. The winners' decomposition \citep[Example~2]{Can2014} first permutes $a_n$ to the first place, which involves one swap in each position from the first to the $(n-1)$th, contributing by $1 + 1/2 + \dots 1/(n-1)$ to the weighted distance. Thereafter, it permutes $a_{n-1}$ to the second place, which involves one swap in each position from the second to the $(n-1)$th, contributing by $1/2 + \dots 1/(n-1)$ to the weighted distance, and so on.
Thus the total weighted distance of two entirely opposite rankings is $1 \times 1 + 2 \times 1/2 + 3 \times 1/3 + \dots + (n-1) \times 1/(n-1) = n-1$.
\end{proof}

We do not know about any other application of \citet{Can2014}'s novel method.

%\begin{landscape}
\begin{table}[!ht]
\centering
\caption{Distances of rankings, ETCC 2011}
\label{TableA7}

\begin{subtable}{\linewidth}
\caption{Kemeny distance: a swap in the $k$th position has a weight of $1$}
\label{TableA7a}
\centering
\begin{tiny}
\noindent\makebox[\textwidth]{
%\extrarowsep=-0.5pt
\rowcolors{1}{gray!25}{}
    \begin{tabularx}{1.05\linewidth}{l CCCCC CCCCC CCCC} \hline \hiderowcolors
        & \rotatebox[origin=c]{75}{Start} & \rotatebox[origin=c]{75}{Official} & \rotatebox[origin=c]{75}{$GRS_1(R^{MP})$} & \rotatebox[origin=c]{75}{$GRS_2(R^{MP})$} & \rotatebox[origin=c]{75}{$LS(R^{MP})$} & \rotatebox[origin=c]{75}{$GRS_1(R^{MB})$} & \rotatebox[origin=c]{75}{$GRS_2(R^{MB})$} & \rotatebox[origin=c]{75}{$LS(R^{MB})$} & \rotatebox[origin=c]{75}{$GRS_1(R^{BM})$} & \rotatebox[origin=c]{75}{$GRS_2(R^{BM})$} & \rotatebox[origin=c]{75}{$LS(R^{BM})$} & \rotatebox[origin=c]{75}{$GRS_1(R^{BP})$} & \rotatebox[origin=c]{75}{$GRS_2(R^{BP})$} & \rotatebox[origin=c]{75}{$LS(R^{BP})$} \\ \hline \showrowcolors
    Start & \cellcolor{gray!80}     & 107   & 100   & 98    & 100   & 107   & 99    & 96    & 110   & 93    & 93    & 130   & 99    & 85 \\
    Official & 107   & \cellcolor{gray!80}     & 37    & 45    & 73    & 0     & 38    & 69    & 25    & 34    & 60    & 71    & 52    & 60 \\ \hline
    $GRS_1(R^{MP})$  & 100   & 37    & \cellcolor{gray!80}     & 16    & 44    & 37    & 13    & 42    & 62    & 31    & 43    & 108   & 61    & 53 \\
    $GRS_2(R^{MP})$ & 98    & 45    & 16    & \cellcolor{gray!80}     & 28    & 45    & 7     & 26    & 70    & 27    & 29    & 114   & 67    & 45 \\
    $LS(R^{MP})$ & 100   & 73    & 44    & 28    & \cellcolor{gray!80}     & 73    & 35    & 8     & 94    & 47    & 21    & 130   & 81    & 41 \\ \hline
    $GRS_1(R^{MB})$ & 107   & 0     & 37    & 45    & 73    & \cellcolor{gray!80}     & 38    & 69    & 25    & 34    & 60    & 71    & 52    & 60 \\
    $GRS_2(R^{MB})$ & 99    & 38    & 13    & 7     & 35    & 38    & \cellcolor{gray!80}     & 33    & 63    & 20    & 32    & 107   & 60    & 40 \\
    $LS(R^{MB})$ & 96    & 69    & 42    & 26    & 8     & 69    & 33    & \cellcolor{gray!80}     & 88    & 41    & 13    & 122   & 73    & 33 \\ \hline
    $GRS_1(R^{BM})$ & 110   & 25    & 62    & 70    & 94    & 25    & 63    & 88    & \cellcolor{gray!80}     & 49    & 79    & 46    & 41    & 71 \\
    $GRS_2(R^{BM})$ & 93    & 34    & 31    & 27    & 47    & 34    & 20    & 41    & 49    & \cellcolor{gray!80}     & 30    & 87    & 40    & 26 \\
    $LS(R^{BM})$ & 93    & 60    & 43    & 29    & 21    & 60    & 32    & 13    & 79    & 30    & \cellcolor{gray!80}     & 111   & 60    & 20 \\ \hline
    $GRS_1(R^{BP})$ & 130   & 71    & 108   & 114   & 130   & 71    & 107   & 122   & 46    & 87    & 111   & \cellcolor{gray!80}     & 57    & 97 \\
    $GRS_2(R^{BP})$ & 99    & 52    & 61    & 67    & 81    & 52    & 60    & 73    & 41    & 40    & 60    & 57    & \cellcolor{gray!80}     & 44 \\
    $LS(R^{BP})$ & 85    & 60    & 53    & 45    & 41    & 60    & 40    & 33    & 71    & 26    & 20    & 97    & 44    & \cellcolor{gray!80} \\ \hline
    \end{tabularx} }
\end{tiny}
\end{subtable}
\vspace{0.5cm}

\begin{subtable}{\linewidth}
\caption{Weighted distance: a swap in the $k$th position has a weight of $1/k$}
\label{TableA7b}
\centering
\begin{tiny}
\noindent\makebox[\textwidth]{
%\extrarowsep=-0.5pt
\rowcolors{1}{gray!25}{}
    \begin{tabularx}{1.05\linewidth}{l CCCCC CCCCC CCCC} \hline \hiderowcolors
        & \rotatebox[origin=c]{75}{Start} & \rotatebox[origin=c]{75}{Official} & \rotatebox[origin=c]{75}{$GRS_1(R^{MP})$} & \rotatebox[origin=c]{75}{$GRS_2(R^{MP})$} & \rotatebox[origin=c]{75}{$LS(R^{MP})$} & \rotatebox[origin=c]{75}{$GRS_1(R^{MB})$} & \rotatebox[origin=c]{75}{$GRS_2(R^{MB})$} & \rotatebox[origin=c]{75}{$LS(R^{MB})$} & \rotatebox[origin=c]{75}{$GRS_1(R^{BM})$} & \rotatebox[origin=c]{75}{$GRS_2(R^{BM})$} & \rotatebox[origin=c]{75}{$LS(R^{BM})$} & \rotatebox[origin=c]{75}{$GRS_1(R^{BP})$} & \rotatebox[origin=c]{75}{$GRS_2(R^{BP})$} & \rotatebox[origin=c]{75}{$LS(R^{BP})$} \\ \hline \showrowcolors
    Start & \cellcolor{gray!80}     & 10.79 & 9.68  & 9.60  & 9.39  & 10.79 & 9.54  & 9.15  & 11.30 & 9.45  & 8.66  & 12.16 & 10.05 & 8.10 \\
    Official & 10.79 & \cellcolor{gray!80}     & 3.04  & 3.67  & 6.33  & 0.00  & 3.08  & 6.12  & 2.09  & 2.74  & 5.62  & 6.55  & 4.89  & 5.58 \\ \hline
    $GRS_1(R^{MP})$ & 9.68  & 3.04  & \cellcolor{gray!80}     & 1.02  & 3.80  & 3.04  & 0.75  & 3.73  & 5.04  & 2.29  & 3.80  & 9.41  & 5.87  & 4.39 \\
    $GRS_2(R^{MP})$ & 9.60  & 3.67  & 1.02  & \cellcolor{gray!80}     & 2.80  & 3.67  & 0.60  & 2.73  & 5.66  & 2.24  & 2.87  & 9.97  & 6.36  & 4.15 \\
    $LS(R^{MP})$ & 9.39  & 6.33  & 3.80  & 2.80  & \cellcolor{gray!80}     & 6.33  & 3.39  & 0.53  & 8.07  & 4.36  & 1.53  & 9.94  & 6.20  & 3.01 \\ \hline
    $GRS_1(R^{MB})$ & 10.79 & 0.00  & 3.04  & 3.67  & 6.33  & \cellcolor{gray!80}     & 3.08  & 6.12  & 2.09  & 2.74  & 5.62  & 6.55  & 4.89  & 5.58 \\
    $GRS_2(R^{MB})$ & 9.54  & 3.08  & 0.75  & 0.60  & 3.39  & 3.08  & \cellcolor{gray!80}     & 3.31  & 5.09  & 1.65  & 3.27  & 9.42  & 5.80  & 3.69 \\
    $LS(R^{MB})$ & 9.15  & 6.12  & 3.73  & 2.73  & 0.53  & 6.12  & 3.31  & \cellcolor{gray!80}     & 7.74  & 4.04  & 1.00  & 9.48  & 5.71  & 2.49 \\ \hline
    $GRS_1(R^{BM})$ & 11.30 & 2.09  & 5.04  & 5.66  & 8.07  & 2.09  & 5.09  & 7.74  & \cellcolor{gray!80}     & 4.10  & 7.20  & 4.48  & 3.96  & 6.58 \\
    $GRS_2(R^{BM})$ & 9.45  & 2.74  & 2.29  & 2.24  & 4.36  & 2.74  & 1.65  & 4.04  & 4.10  & \cellcolor{gray!80}     & 3.29  & 8.01  & 4.23  & 2.98 \\
    $LS(R^{BM})$ & 8.66  & 5.62  & 3.80  & 2.87  & 1.53  & 5.62  & 3.27  & 1.00  & 7.20  & 3.29  & \cellcolor{gray!80}     & 8.79  & 4.79  & 1.49 \\ \hline
    $GRS_1(R^{BP})$ & 12.16 & 6.55  & 9.41  & 9.97  & 9.94  & 6.55  & 9.42  & 9.48  & 4.48  & 8.01  & 8.79  & \cellcolor{gray!80}     & 4.53  & 7.78 \\
    $GRS_2(R^{BP})$ & 10.05 & 4.89  & 5.87  & 6.36  & 6.20  & 4.89  & 5.80  & 5.71  & 3.96  & 4.23  & 4.79  & 4.53  &  \cellcolor{gray!80} & 3.64 \\
    $LS(R^{BP})$ & 8.10  & 5.58  & 4.39  & 4.15  & 3.01  & 5.58  & 3.69  & 2.49  & 6.58  & 2.98  & 1.49  & 7.78  & 3.64  & \cellcolor{gray!80} \\ \hline
    \end{tabularx} }
\end{tiny}
\end{subtable}
\end{table}
%\end{landscape}

Distances of rankings of ETCC 2011 is presented in Table~\ref{TableA7}. All Kemeny distances are significantly smaller than its maximum of $703$ for entirely opposite rankings. Largest values usually occur in comparison with Start since the latter is not influenced by the results. However, rankings based on match points and board points are also relatively far from each other. Official coincides with $GRS_1(R^{MB})$.

Weighted distances are presented in Table~\ref{TableA7b}. Its maximum is $37$. Ratio of Kemeny and weighted distances are between $8.73$ and $17.44$ for ETCC 2011, and between $5.81$ and $18.73$ for ETCC 2013. In the second case accounting for swaps' positions has a larger effect but the discrepancy of the two distances remains smaller than expected, that is, variations are more or less equally distributed along the rankings. 

The ranking from $GRS_1(R^{MP})$ means a kind of tie-breaking rule for match points both in ETCC 2011 and ETCC 2013: generalized row sum gives the match points ranking for $\varepsilon = 0$, while a small increase in the parameter breaks ties among teams according to the strength of their opponents. The official ranking also aims to eliminate ties, although it uses a different approach.

%While Table~\ref{TableA7} gives some information about the rankings, it does not much simplify their comparison. We suggest to achieve this by a graphical representation. 
The pairwise distances of $14$ rankings can be plotted in a $13$-dimensional space without loss of information but it still seems to be unmanageable. Therefore multidimensional scaling \citep{KruskalWish1978} has been applied, similarly to \citet{Csato2013a}. It is a statistical method in information visualization for exploring similarities or dissimilarities in data, a textbook application of MDS is to draw cities on a map from the matrix consisting of their air distances.

Kemeny and weighted distances are measured on a ratio scale due to the existence of a natural minimum and maximum. Then discrepancies of the reduced dimensional map are linear functions of the original distances.
Both Stress and RSQ tests for validity strengthen that two dimensions are sufficient to plot the $14$ rankings, however, one dimension is too restrictive. The method produces a map where only the position of objects count, more similar rankings are closer to each other. Only the distances of points representing the rankings yield information, the meaning of the axes remains obscure.

MDS maps reinforce the conjecture from Table~\ref{TableA7} that Start is far away from all other rankings (see  \citet[Figure 2]{Csato2016a}). Thus Start ranking is omitted from further analysis, which improves the mapping, too.

\begin{figure}[!ht]
\centering
\caption{MDS maps of rankings, ETCC 2013}
\label{Fig3}
  
\begin{subfigure}{\textwidth}
  \centering
  \caption{Kemeny distance, without Start}
  \label{Fig3a}
\begin{tikzpicture}[scale=2.5, auto=center, transform shape]
\tikzstyle{every node}=[font=\tiny]
%\draw[step=0.5cm,color=gray] (-3,-3) grid (3,3);  
\draw (0.0138,-0.4964) node {$\bullet$};
\node [below] at (0.0138,-0.4964) {Off};

\draw (0.534,-0.5716) node {\ding{64}};
\node [below] at (0.534,-0.5716) {G1};

\draw (0.9647,-0.3498) node {$\times$};
\node [below] at (0.9647,-0.3498) {S1};

\draw (1.6836,0.145) node {$\diamond$};
\node [below] at (1.6836,0.145) {L1};

\draw (-0.8427,-0.7025) node {\ding{64}};
\node [below] at (-0.8427,-0.7025) {G2};

\draw (0.6915,-0.2791) node {$\times$};
\node [above] at (0.6915,-0.2791) {S2};

\draw (1.4897,0.2395) node {$\diamond$};
\node [above] at (1.4897,0.2395) {L2};

\draw (-1.4181,-0.6437) node {\ding{64}};
\node [below] at (-1.4181,-0.6437) {G3};

\draw (0.1302,0.006) node {$\times$};
\node [below] at (0.1302,0.006) {S3};

\draw (1.0336,0.5183) node {$\diamond$};
\node [below] at (1.0336,0.5183) {L3};

\draw (-3.1539,0.4271) node {\ding{64}};
\node [below] at (-3.1539,0.4271) {G4};

\draw (-1.3134,0.666) node {$\times$};
\node [below] at (-1.3134,0.666) {S4};

\draw (0.1869,1.0412) node {$\diamond$};
\node [below] at (0.1869,1.0412) {L4};
\end{tikzpicture}
\end{subfigure}
\vspace{1cm}

\begin{subfigure}{\textwidth}
  \centering
  \caption{Weighted distance, without Start}
  \label{Fig3b}

\begin{tikzpicture}[scale=2.5, auto=center, transform shape]
\tikzstyle{every node}=[font=\tiny]
%\draw[step=0.5cm,color=gray] (-3,-3) grid (3,3);
\draw (0.2885,-0.4703) node {$\bullet$};
\node [below] at (0.2885,-0.4703) {Off};

\draw (0.7103,-0.4816) node {\ding{64}};
\node [below] at (0.7103,-0.4816) {G1};

\draw (0.9486,-0.3615) node {$\times$};
\node [right] at (0.9486,-0.3615) {S1};

\draw (1.4247,0.4121) node {$\diamond$};
\node [below] at (1.4247,0.4121) {L1};

\draw (-0.3476,-0.789) node {\ding{64}};
\node [below] at (-0.3476,-0.789) {G2};

\draw (0.7663,-0.3182) node {$\times$};
\node [above] at (0.7663,-0.3182) {S2};

\draw (1.2999,0.4074) node {$\diamond$};
\node [above] at (1.2999,0.4074) {L2};

\draw (-1.3262,-0.6883) node {\ding{64}};
\node [below] at (-1.3262,-0.6883) {G3};

\draw (0.1168,-0.3467) node {$\times$};
\node [above] at (0.1168,-0.3467) {S3};

\draw (0.9023,0.6299) node {$\diamond$};
\node [below] at (0.9023,0.6299) {L3};

\draw (-3.2791,0.3743) node {\ding{64}};
\node [below] at (-3.2791,0.3743) {G4};

\draw (-1.5192,0.4081) node {$\times$};
\node [below] at (-1.5192,0.4081) {S4};

\draw (0.0147,1.2237) node {$\diamond$};
\node [below] at (0.0147,1.2237) {L4};
\end{tikzpicture}
\end{subfigure}
\end{figure}

There is not much difference between the four charts (ETCC 2011 vs. 2013, Kemeny vs. weighted distances). MDS procedures of ETCC 2013 and Kemeny distances have more favourable validity measures than MDS procedures of ETCC 2011 and weighted distances. They reveal the following results shown by Figure~\ref{Fig3}:
\begin{enumerate}%[label=$\bullet$]
\item
Start significantly differs from all other rankings since it does not depend on the results of the tournament;
\item
Generalized row sum rankings (with low $\lambda$) are more similar to the official one than least squares;
\item
The order of results matrices by the variability of rankings for a given scoring method is $R^{MP} < R^{MB} < R^{BM} < R^{BP}$, a greater role of match points (smaller $\lambda$) stabilizes the rankings;
\item
The order of scoring procedures by the variability of rankings for a given results matrix is $LS < GRS_2 < GRS_1$, a greater influence of opponents' results stabilizes the rankings;
\item
The effect of tie-breaking rule for match points is not negligible (Off and G1 are not very close to each other).
\end{enumerate}

On the basis of these observations, the application of least squares with a generalized results matrix favouring match points (a low $\lambda$, for example, $1/4$ as in $R^{MB}$) is proposed for ranking in Swiss system chess team tournaments. It gives an incentive to score more board points but still prefers match points.

\subsection{Analysis of a ranking} \label{Sec43}

The decomposition of the least squares rating \citep{Csato2015a} offers another approach to compare the rankings. The ranking problem is balanced and the comparison multigraph is regular, therefore Proposition~\ref{Prop4} can be applied. In the zeroth step ($\mathbf{q}^{(0)}$) it gives the match points ranking, the official ranking without the application of tie-breaking rules. After that, the iterated ratings reflect the strength of opponents, opponents of opponents and so on by accounting for their average match points as $\mathfrak{d}I-L = M$.
A ranking equivalent to $\mathbf{q}(R^{MP})$ is obtained after the seventh (from $\mathbf{q}^{(7)}(R^{MP})$) and after the twelfth step (from $\mathbf{q}^{(12)}(R^{MP})$) in the case of ETCC 2011 and ETCC 2013, respectively.

\begin{table}[!ht]
\centering
\caption[Positional changes in the decomposition of $LS(R^{MP})$, ETCC 2013]{Positional changes in the decomposition of $LS(R^{MP})$, ETCC 2013 \\
\vspace{0.25cm}
\footnotesize{Rank improvements and declines between the rankings from the corresponding $\mathbf{q}^{(k)}$ rating are indicated by the arrows $\up$ and $\down$, respectively. A number in brackets represent the same number of $\up$ or $\down$ arrows. Lack of change is indicated by --.}
}
\label{Table1}
%\resizebox{\textwidth}{!}{
\begin{footnotesize}
\rowcolors{1}{gray!25}{}
    \begin{tabularx}{\textwidth}{L ccccc cccccc C} \toprule \hiderowcolors
    \multirow{2}{\linewidth}{Team} & \multicolumn{11}{c}{Value of $k$ in $\mathbf{q}^{(k)}$} \\
     & Off (0) & 1     & 2     & 3     & 4     & 5     & 6     & 9     & 12    & Cumulated & LS ($\infty$) \\ \midrule \showrowcolors
    Azerbaijan & 1     & --    & \down & --    & --    & --    & --    & --    & --    & \down & 2 \\
    France & 2     & --    & \up & --    & --    & --    & --    & --    & --    & \up & 1 \\
    Russia & 3     & \down & --    & --    & --    & --    & --    & --    & --    & \down & 4 \\
    Armenia & 4     & \up & --    & --    & --    & --    & --    & --    & --    & \up & 3 \\
    Hungary & 5     & --    & --    & --    & --    & --    & --    & --    & --    & --    & 5 \\
    Georgia & 6     & --    & --    & --    & --    & --    & --    & --    & --    & --    & 6 \\
    Greece & 7     & --    & --    & --    & --    & --    & \down & --    & --    & \down & 8 \\
    Czech Rep. & 8     & \down & \down & --    & --    & --    & --    & --    & --    & \down \down & 10 \\
    Ukraine & 9     & \up & --    & --    & --    & --    & \up & --    & --    & \up \up & 7 \\
    England & 10    & --    & \up & --    & --    & --    & --    & --    & --    & \up & 9 \\
    Netherlands & 11    & \down \, (6) & --    & --    & --    & --    & --    & --    & --    & \down \, (6) & 17 \\
    Italy & 12    & \up & --    & --    & --    & \down & --    & --    & --    & --    & 12 \\
    Serbia & 13    & \down \down \down & \down \down & --    & --    & \down & --    & --    & --    & \down \, (6) & 19 \\
    Romania & 14    & \down \, (4) & \up \up & --    & \up & --    & --    & --    & --    & \down & 15 \\
    Belarus & 15    & \up \up \up & --    & --    & --    & \up & --    & --    & --    & \up \, (4) & 11 \\
    Poland & 16    & \up \up \up & --    & --    & --    & \down & --    & --    & --    & \up \up & 14 \\
    Croatia & 17    & \up \up & --    & --    & \down & --    & --    & --    & --    & \up & 16 \\
    Montenegro & 18    & \down & --    & \down & --    & --    & --    & --    & \down & \down \down \down & 21 \\
    Spain & 19    & \down \down & --    & --    & --    & --    & \down & --    & --    & \down \down \down & 22 \\
    Germany & 20    & --    & --    & \up & --    & \up & --    & --    & --    & \up \up & 18 \\
    Slovenia & 21    & \up \, (7) & --    & --    & --    & \up & --    & --    & --    & \up \, (8) & 13 \\
    Poland Futures & 22    & \down \down & --    & \down & --    & \down & --    & --    & --    & \down \, (4) & 26 \\
    Lithuania & 23    & \down \down & \down \, (4) & --    & --    & --    & \down & --    & --    & \down \, (7) & 30 \\
    Turkey & 24    & \up \up & --    & --    & --    & --    & \up & --    & \up & \up \, (4) & 20 \\
    Bulgaria & 25    & \up \up & --    & --    & --    & --    & --    & --    & --    & \up \up & 23 \\
    Sweden & 26    & \down & --    & \down & --    & --    & --    & --    & --    & \down \down & 28 \\
    Denmark & 27    & \down \down \down & \down & --    & --    & --    & --    & \down & --    & \down \, (5) & 32 \\
    Israel & 28    & \up \up & \up & \up & --    & --    & --    & --    & --    & \up \, (4) & 24 \\
    Iceland & 29    & \down \down \down & --    & --    & --    & --    & --    & \up & --    & \down \down & 31 \\
    Austria & 30    & \up \up & \up \up & --    & --    & \up & --    & --    & --    & \up \, (5) & 25 \\
    Poland Goldies & 31    & --    & \up & --    & --    & --    & \up & --    & --    & \up \up & 29 \\
    Switzerland & 32    & \up \up \up & \up & \up & --    & --    & --    & --    & --    & \up \, (5) & 27 \\
    Belgium & 33    & --    & --    & \down & --    & --    & --    & --    & --    & \down & 34 \\
    Finland & 34    & --    & --    & \up & --    & --    & --    & --    & --    & \up & 33 \\
    Norway & 35    & --    & --    & --    & --    & --    & --    & --    & --    & --    & 35 \\
    Scotland & 36    & --    & --    & --    & --    & --    & --    & --    & --    & --    & 36 \\
    FYR Macedonia & 37    & --    & --    & --    & --    & --    & --    & --    & --    & --    & 37 \\
    Wales & 38    & --    & --    & --    & --    & --    & --    & --    & --    & --    & 38 \\ \hline
    \end{tabularx} %}
\end{footnotesize}
\end{table}

Table~\ref{Table1} shows the changes of teams' positions in each step of the decomposition of the ranking $LS(R^{MP})$ for ETCC 2013. In the second column ($\mathbf{q}^{(0)}$), ties are broken according to the official rules, so it coincides with the official ranking. In subsequent steps there are no ties.
The last change is a swap of Turkey and Montenegro in the twelfth step of the iteration. The least squares method is far from being only a tie-breaking rule for match points (contrary to generalized row sum with $\varepsilon_1 = 1/324$), a team may overtake another one despite its disadvantage of two match points.

Correction according to opponents' strength results in seven positions improvement for Slovenia together with a four positions decline for Romania and six for Netherlands. Hence Slovenia overtakes Netherlands despite it has a two match points disadvantage.\footnote{~Tie-breaking rule $TB4$ (aggregated number of board points of the opponents) shows a similar direction of adjustment.}
Subsequent steps of the iteration usually result in a similar direction of swaps, however, in a more moderated extent. A notable exception is Romania, regaining some positions due to indirect opponents. The monotonic decrease of absolute adjustments is violated only by Lithuania.

There are two changes among the top six teams. France becomes the winner of the tournament after $k=2$ instead of Azerbaijan. It can be debated since the latter team has no loss, however, the schedule of France was more difficult. The swap of Russia and Armenia may be explained by the advance on an outer circle of the former team (i.e. Russia had a worse performance than Armenia during the first rounds of the tournament). 

\begin{table}[!ht]
\centering
\caption[Positional changes in the decomposition of $LS(R^{MP})$, ETCC 2011]{Positional changes in the decomposition of $LS(R^{MP})$, ETCC 2011 \\
\vspace{0.25cm}
\footnotesize{Rank improvements and declines between the rankings from the corresponding $\mathbf{q}^{(k)}$ rating are indicated by the arrows $\up$ and $\down$, respectively. A number in brackets represents the same number of $\up$ or $\down$ arrows. Lack of change is indicated by --.}
}
\label{TableA8}
\begin{footnotesize}
%\resizebox{\textwidth}{!}{
\rowcolors{1}{gray!25}{}
    \begin{tabularx}{\textwidth}{L ccccc cccccc C} \toprule \hiderowcolors
    \multirow{2}{*}{Team} & \multicolumn{10}{c}{Value of $k$ in $\mathbf{q}^{(k)}$} \\
    & Off (0) & 1     & 2     & 3     & 4     & 5     & 7     & 8     & Cumulated & LS ($\infty$) \\ \midrule \showrowcolors
    Germany & 1     & --    & --    & --    & --    & \down & --    & --    & \down & 2 \\
    Azerbaijan & 2     & --    & --    & --    & --    & \up   & --    & --    & \up   & 1 \\
    Hungary & 3     & \down \down \down & --    & --    & --    & --    & --    & --    & \down \down \down & 6 \\
    Armenia & 4     & \down & --    & --    & --    & --    & --    & --    & \down & 5 \\
    Russia & 5     & \up \up & --    & --    & --    & --    & --    & --    & \up \up & 3 \\
    Netherlands & 6     & \down & --    & \down & --    & --    & --    & --    & \down \down & 8 \\
    Bulgaria & 7     & \up \up \up & --    & --    & --    & --    & --    & --    & \up \up \up & 4 \\
    Poland & 8     & \down \, (6) & --    & \down & --    & --    & \down & --    & \down \, (8) & 16 \\
    Romania & 9     & --    & \down & \down & --    & \down & --    & --    & \down \down \down & 12 \\
    Spain & 10    & \up \up & --    & \up   & --    & --    & --    & --    & \up \up \up & 7 \\
    Italy & 11    & \up   & \up   & --    & --    & --    & --    & --    & \up \up & 9 \\
    Serbia & 12    & \down \, (7) & --    & --    & \down \down & --    & --    & --    & \down \, (9) & 21 \\
    Georgia & 13    & \down \, (9) & \down & \down & \down & \down & --    & \down & \down \, (14) & 27 \\
    Israel & 14    & \down \down & --    & --    & --    & --    & \up   & --    & \down & 15 \\
    Ukraine & 15    & \up \up \up & \up   & \up   & --    & --    & \down & --    & \up \, (4) & 11 \\
    Czech Rep. & 16    & \up \up \up & \down \down & \up   & --    & --    & --    & --    & \up \up & 14 \\
    Slovenia & 17    & \up \, (6) & \down \down & --    & --    & --    & --    & --    & \up \, (4) & 13 \\
    Moldova & 18    & \down \down & --    & --    & --    & --    & --    & --    & \down \down & 20 \\
    France & 19    & \up \, (4) & \up \up \up & --    & --    & \up   & \up   & --    & \up \, (9) & 10 \\
    Greece & 20    & \up \up \up & --    & --    & --    & --    & --    & --    & \up \up \up & 17 \\
    Croatia & 21    & \up \up \up & --    & --    & --    & --    & --    & --    & \up \up \up & 18 \\
    England & 22    & \up   & --    & --    & \up \up & --    & --    & --    & \up \up \up & 19 \\
    Switzerland & 23    & \down \, (4) & \up \up \up & \up   & --    & --    & --    & --    & --    & 23 \\
    Latvia & 24    & \up   & \up   & --    & --    & --    & --    & --    & \up \up & 22 \\
    Montenegro & 25    & --    & \down & \down \down \down & --    & --    & --    & --    & \down \, (4) & 29 \\
    Iceland & 26    & --    & \down \down & --    & --    & --    & --    & --    & \down \down & 28 \\
    Sweden & 27    & \up \up \up & \down & --    & \up   & --    & \down & --    & \up \up & 25 \\
    Denmark & 28    & --    & \up   & \up   & \down & --    & --    & \up   & \up \up & 26 \\
    Norway & 29    & \down & \down & --    & --    & --    & --    & --    & \down \down & 31 \\
    FYROM & 30    & \down \down \down & --    & --    & --    & --    & --    & --    & \down \down \down & 33 \\
    Finland & 31    & \down & --    & --    & --    & --    & --    & --    & \down & 32 \\
    Austria & 32    & \up   & \up   & --    & --    & --    & --    & --    & \up \up & 30 \\
    Lithuania & 33    & \up \, (4) & --    & \up \up & \up   & \up   & \up   & --    & \up \, (9) & 24 \\
    Turkey & 34    & --    & --    & --    & --    & --    & --    & --    & --    & 34 \\
    Scotland & 35    & --    & --    & --    & --    & --    & --    & --    & --    & 35 \\
    Luxembourg & 36    & --    & --    & --    & --    & --    & --    & --    & --    & 36 \\
    Wales & 37    & --    & --    & --    & --    & --    & --    & --    & --    & 37 \\
    Cyprus & 38    & --    & --    & --    & --    & --    & --    & --    & --    & 38 \\ \hline
    \end{tabularx} %}
\end{footnotesize}
\end{table}

Imperfection of the official ranking is further highlighted by ETCC 2011, for which Table~\ref{TableA8} contains the positional changes according to the iterative decomposition of $LS(R^{MP})$.
Here France scored three wins and three draws in the first six rounds but it has been defeated three times after that, presenting an extreme example of advance on an inner circle. Thus France had a more challenging schedule compared to teams with the same number of match points, reflected in the significant adjustment by the least squares method.

On the other side, Serbia loses nine, and Georgia loses $14$ positions. They had luck with the opponents, for example, Georgia had not played against a better team according to the official ranking, which is quite strange for a team at the $13$th place. Consequently, both Serbia and Georgia significantly benefit from decreasing $\varepsilon$ or increasing the role of board points.    

\subsection{Assessment of the rankings} \label{Sec44}

The $14$ rankings are evaluated from three aspects:
\begin{itemize}[label=$\bullet$]
\item
Predictive performance: ability to forecast the outcomes of future matches;
\item
Retrodictive performance: ability to match the results of contests already played;
\item
Robustness: stability between subsequent rounds.
\end{itemize}
The first two are standard aspects for the classification of mathematical ranking models \citep{Pasteur2010}.
However, for the ranking of a Swiss system tournament, the second is much more important: the aim is to get a meaningful ranking on the basis of matches already played, shown by in-sample fit.

The third measure, stability, seems to be important because of (at least) two causes. First, both the participants and the audience may dislike if the rankings are volatile. Naturally, extreme stability is not favourable, too, but it is usually not a problem in a Swiss system tournament. The second argument for robustness may be that the number of rounds is often determined arbitrarily, for instance, it was $13$ in the 2006 and $11$ in the 2013 chess olympiads with $148$ and $146$ teams, respectively.

Predictive and retrodictive performances are measured by the number of match and board points scored by an underdog against a better team. It does not take into account the difference of positions, only its sign.
%Some results are presented in \citet{Csato2016a}.
%They are qualitatively equivalent, the methods applied behave similarly in all cases.

Prediction power has a meaning only after the third round, when the comparison multigraph becomes connected. Start has the most favourable forecasting performance for the remaining matches, especially in the first rounds, that is, match outcomes are determined by teams' ability rather than by their results in the competition \citep[Figure~A.1]{Csato2016a}. There is also no difference among the methods in prediction power if only the next round is scrutinized \citep[Figure~A.2]{Csato2016a}.

The fact that Start ranking is the best for forecasting match results reflects the insignificance of prediction precision for a Swiss system tournament ranking: after all, what is the meaning to organize a contest if its final result is determined by teams' ability?

Retrodictive performance has a meaning after the third round, too, however, it is also defined after the last round when prediction power cannot be interpreted.
Least squares method seems to be the best from this point of view, despite its statistical significance remains dubious \citep[Figure~A.3]{Csato2016a}. Generalized row sum is placed between the least squares and official rankings. Choice of the results matrix and the tournament does not influence these findings.
 
\begin{figure}[htbp]
\centering
\caption{Robustness (distance between subsequent rounds), ETCC 2011}
\label{Fig4}

\begin{subfigure}{\textwidth}
  \centering
  \caption{Kemeny distance, results matrix $R^{MP}$}
  \label{Fig4a}
  
\begin{tikzpicture}
%\selectcolormodel{gray}
\begin{axis}[width=\textwidth, 
height=0.65\textwidth,
symbolic x coords={$3-4$,$4-5$,$5-6$,$6-7$,$7-8$,$8-9$},
xtick=data,
%xlabel = Fordulók száma,
%ylabel = Mérkõzéspont,
legend entries={Off$\quad$,G1$\quad$,S1$\quad$,L1},
legend style={at={(0.5,-0.1)},anchor = north,legend columns = 6},
ymajorgrids = true,
]

\addplot[red,dashed,mark=o,mark size=3pt,mark options={solid},thick] coordinates{
($3-4$,98)
($4-5$,89)
($5-6$,61)
($6-7$,75)
($7-8$,71)
($8-9$,69)
}
node [pos=1,pin={[pin edge={white}, pin distance=0cm] 0:{Off}}] {};

\addplot[green,mark=triangle,mark size=3pt,mark options={solid},thick] coordinates{
($3-4$,100)
($4-5$,87)
($5-6$,63)
($6-7$,79)
($7-8$,53)
($8-9$,55)
}
node [pos=1,pin={[pin edge={white}, pin distance=0cm] 0:{G1}}] {};

\addplot[black,dotted,mark=diamond,mark size=3pt,mark options={solid},thick] coordinates{
($3-4$,100)
($4-5$,89)
($5-6$,57)
($6-7$,65)
($7-8$,39)
($8-9$,45)
}
node [pos=1,pin={[pin edge={white}, pin distance=0cm] 0:{S1}}] {};

\addplot[blue,dashdotted,mark=star,mark size=3pt,mark options={solid},thick] coordinates{
($3-4$,82)
($4-5$,75)
($5-6$,36)
($6-7$,44)
($7-8$,17)
($8-9$,20)
}
node [pos=1,pin={[pin edge={white}, pin distance=0cm] 0:{L1}}] {};
\end{axis}
\end{tikzpicture}
\end{subfigure}
\vspace{0.5cm}

\begin{subfigure}{\textwidth}
  \centering
  \caption{Weighted distance, results matrix $R^{MP}$}
  \label{Fig4b}
  
\begin{tikzpicture}
%\selectcolormodel{gray}
\begin{axis}[width=\textwidth, 
height=0.65\textwidth,
symbolic x coords={$3-4$,$4-5$,$5-6$,$6-7$,$7-8$,$8-9$},
xtick=data,
%xlabel = Fordulók száma,
%ylabel = Mérkõzéspont,
legend entries={Off$\quad$,G1$\quad$,S1$\quad$,L1},
legend style={at={(0.5,-0.1)},anchor = north,legend columns = 6},
ymajorgrids = true,
]

\addplot[red,dashed,mark=o,mark size=3pt,mark options={solid},thick] coordinates{
($3-4$, 9.0083)
($4-5$, 8.7677)
($5-6$, 5.6199)
($6-7$, 7.0279)
($7-8$, 5.9262)
($8-9$, 6.3287)
}
node [pos=1,pin={[pin edge={white}, pin distance=0cm] 0:{Off}}] {};

\addplot[green,mark=triangle,mark size=3pt,mark options={solid},thick] coordinates{
($3-4$, 6.7867)
($4-5$, 7.569)
($5-6$, 5.9343)
($6-7$, 6.4027)
($7-8$, 4.8679)
($8-9$, 4.3984)
}
node [pos=1,pin={[pin edge={white}, pin distance=0cm] 0:{G1}}] {};

\addplot[black,dotted,mark=diamond,mark size=3pt,mark options={solid},thick] coordinates{
($3-4$, 6.7956)
($4-5$, 7.5315)
($5-6$, 5.5539)
($6-7$, 5.5205)
($7-8$, 3.5767)
($8-9$, 3.3866)
}
node [pos=1,pin={[pin edge={white}, pin distance=0cm] 0:{S1}}] {};

\addplot[blue,dashdotted,mark=star,mark size=3pt,mark options={solid},thick] coordinates{
($3-4$, 5.8299)
($4-5$, 5.763)
($5-6$, 2.5053)
($6-7$, 3.6458)
($7-8$, 1.6695)
($8-9$, 1.6821)
}
node [pos=1,pin={[pin edge={white}, pin distance=0cm] 0:{L1}}] {};
\end{axis}
\end{tikzpicture}
\end{subfigure}
\end{figure}

Stability is defined as the distance of rankings in subsequent rounds. It has no meaning for Start but can be calculated for all other rankings from the third round.
Figure~\ref{Fig4} illustrates the robustness of some rankings in ETCC 2011. Variability does not decrease monotonically, but a solid decline is observed as the actual round gives relatively fewer and fewer information. Ranking $LS(R^{MP})$ is the most robust according to both Kemeny and weighted distances, followed by $GRS_2(R^{MP})$, then $GRS_1(R^{MP})$ and Official: rankings become less volatile by taking into account the performance of opponents. Difference of absolute values seems to be more significant in the case of weighted distance, the least squares method is robust especially in the first, critical places. 
The order of variability $LS < GRS_2 < GRS_1$ is valid for all other result matrices, however, $GRS_1$ is sometimes more volatile than the official ranking.

In the case of ETCC 2013, these conclusions are more uncertain but least squares remains the most stable with the exception of first rounds \citep[Figure~A.4]{Csato2016a}. Readers interested in a somewhat more detailed analysis of the two tournaments are encouraged to study \citet{Csato2016a}.

To summarize, the least squares method gives the most robust and legitimate ranking. Therefore, its application is also recommended if the organizers want to mitigate the effects of the (predetermined) number of rounds on the ranking.

\section{Discussion} \label{Sec5}

The paper has given an axiomatic analysis of ranking in Swiss system chess team tournaments. The framework is flexible with respect to the role of the opponents (parameter $\varepsilon$) and the influence of match and board points (choice of the results matrix).
The suggested methods are close to the concept of official rankings (they coincide in the case of round-robin tournaments), can be calculated iteratively or by solving a system of linear equations and have a clear interpretation on the comparison multigraph. They also do not call for arbitrary tie-breaking rules.

The model is tested on the results of the 2011 and 2013 European Team Chess Championship open tournaments, which supports the application of least squares method due to its relative insensitivity to the choice between match and board points, retrodictive accuracy and stability. There is an opportunity to take into account the number of board points scored by using a generalized results matrix favouring match points (small $\lambda$ close to zero).
The findings confirm that the official rankings have significant failures, therefore recursive methods, similar to generalized row sum and least squares, are worth to consider for ranking purposes.

Naturally, the framework may have some disadvantages \citep{Brozos-Vazquezetal2010}: a computer is needed in order to calculate the ranking of the tournament, and it will be difficult for the players to verify and understand the whole procedure. However, we agree with \citet{Forlano2011} that '\emph{The fact that players are not able to foresee the final standing should not be considered a disadvantage but a way to force the players to play each round as the decisive one.}' as well as '\emph{The fact that the the players cannot replicate the method manually should be seen of no significance.}'
While the least squares method is more complicated than usual tie-breaking rules, its simple graph interpretation \citep{Csato2015a} and its similarity to an 'infinite Buchholz' may help in the understanding.

Anyway, there usually exists a trade-off between simplicity and other favourable properties (sample fit, robustness), and the use of more developed methods is worth to consider in the case of Swiss system tournaments in order to avoid anomalies of the ranking,\footnote{~An excellent example is Georgia's 13th place in ETCC 2011 such that it have not played any teams better according to the official ranking.} such as when a Hungarian commentator speaks about '\emph{the curse of the Swiss system}'.\footnote{~See at \url{http://sakkblog.postr.hu/sokan-palyaznak-dobogos-helyezesre-izgalmas-utolso-fordulo-dont}.}
It is not necessarily the mistake of Swiss system rather a failure of the official ranking, which can be improved significantly by accounting for the strength of opponents.

Nevertheless, the choice between simplicity and more plausible rankings is not a modelling issue. An alternative may be to use these methods only for tie-breaking purposes.  

There are some obvious areas of future research.
In the analysis several complications observed have been neglected like matches played with black or white (an unavoidable problem in individual tournaments) or different number of matches due to byes or unplayed games. The choice of parameter $\varepsilon$ also requires further investigation.
Our findings can be strengthened or falsified by the examination of other competitions and simulations of Swiss system tournaments.

Finally, two possible uses of the proposed ranking method are worth to mention. First, it can be incorporated into the pairing algorithm, resulting in a more balanced schedules. Second, extensive analysis of the stability of a ranking between subsequent rounds may contribute to the choice of the number of rounds, which can be made endogenous as a function of the number of participants and other restrictions.

\section*{Acknowledgements}
\addcontentsline{toc}{section}{Acknowledgements}
\noindent
We are grateful to two anonymous referees for their valuable comments and suggestions. \\
The research was supported by OTKA grant K 111797 and by the MTA Premium Post Doctorate Research Program. \\
This research was partially supported by Pallas Athene Domus Scientiae Foundation. The views expressed are those of the author's and do not necessarily reflect the official opinion of Pallas Athene Domus Scientiae Foundation.

%\bibliographystyle{apalike}
%\bibliography{All_references}

\end{document}